\title{Differentially Private Approval-Based Committee Voting}
\author{%
  Zhechen Li$^1$, Zimai Guo$^1$, Lirong Xia$^3$, Yongzhi Cao$^{1*}$, Hanpin Wang$^{1,4}$ \\
  $^1$Peking University, $^{2,3}$Rensselaer Polytechnic Institute, $^4$Guangzhou University \\
  $^1$\texttt{\{lizhechen,endergzm,caoyz,whpxhy\}@pku.edu.cn} \\
  $^3$\texttt{xialirong@gmail.com} \\
}
\newtheorem{theorem}{Theorem}
\newtheorem{definition}{Definition}
\newtheorem{proposition}{Proposition}
\newtheorem{lemma}{Lemma}
\newtheorem{corollary}{Corollary}
\providecommand{\sketchname}{\hspace{\parindent}\textsc{\rm Proof Sketch}}
\newenvironment{pfsketch}[1][\sketchname]{\par
  \pushQED{\qed}%
  \normalfont \topsep6\p@\@plus6\p@\relax
  \trivlist
  \item[\hskip\labelsep
        \itshape
    #1\@addpunct{.}]\ignorespaces
}{\popQED\endtrivlist\@endpefalse}
\newcommand{\powerset}[1]{\mathcal{P}(#1)}
\def\JR{\text{JR}}
\def\PJR{\text{PJR}}
\def\EJR{\text{EJR}}
\def\supp{\operatorname{Supp}}
\def\Ak{\powerset{A,k}}
\def\themap{\powerset{A}^n\times\N\to \mathcal{R}(\powerset{A})}
\def\p{\mathbb{P}}
\def\N{\mathbb{N}_+}
\def\R{\mathbb{R}}
\def\erm{ {\rm e} }
\begin{document}

\title{Differentially Private Approval-Based Committee Voting}

\maketitle

\begin{abstract}
  In this paper, we investigate tradeoffs among differential privacy (DP) and several representative axioms for approval-based committee voting, including justified representation, proportional justified representation, extended justified representation, Pareto efficiency, and Condorcet criterion. Without surprise, we demonstrate that all of these axioms are incompatible with DP, and thus establish both upper and lower bounds for their two-way tradeoffs with DP. Furthermore, we provide upper and lower bounds for three-way tradeoffs among DP and every pairwise combination of such axioms, revealing that although these axioms are compatible without DP, their optimal levels under DP cannot be simultaneously achieved. Our results quantify the effect of DP on the satisfaction and compatibility of the axioms in approval-based committee voting, which can provide insights for designing voting rules that possess both privacy and axiomatic properties.
\end{abstract}

\section{Introduction}

Voting is widely used in elections, where a group of voters cast ballots representing their preferences over a set of alternatives, and a voting rule is subsequently employed to determine the winner. In this paper, our primary focus lies on committee voting rules, where the winner is not a singular alternative but a group of alternatives. Specifically, we focus on approval-based committee voting rules, abbreviated as {\em ABC rules}. In this framework, each voter's preference is dichotomous, i.e., they categorize the set of alternatives into approved and disapproved subsets. Similar to single-winner voting, the evaluation of ABC rules are usually based on various axiomatic properties \citep{Plott76:Axiomatic}, known as voting axioms. From the perspective of axiomatic approaches, the primary aspect that sets committee voting apart from single-winner voting is to consider fairness, which can be described by proportionality. Intuitively, proportionality requires the ABC rule to grant smaller and larger groups of voters a fair consideration of their preferences \citep{lackner2023multi}, and has been extensively studied in social choice thoery \citep{aziz2017justified,peters2020proportionality,skowron2021proportionality}, participatory budgeting~\citep{brill2023proportionality,aziz2024fair}, and machine learning \citep{chen2019proportionally, micha2020proportionally}.

In recent years, privacy concerns in voting have emerged as a pressing public issue. For real-world elections, minimizing the amount of privacy leakage helps against censorship, coercion, and bribery~\citep{ao2020private}. However, revealing the winners of elections can sometimes cause privacy leakage. To illustrate this, just consider the following scenario, which is similar to the motivating example in~\citep{ao2020private}. In an approval-based committee election, an adversary discovers (by questionnaries, social media analysis, or other methods) that the preferences of all the voters except Alice form a tie. Then the adversary can obtain the preference of Alice accurately only by observing the winner of the election, even if he has no prior knowledge about Alice, since Alice must vote for the winning committee to break tie. From the perspective of differential privacy~\citep{Dwork06D}, the occurrence of the scenario above should be regarded as a violation of privacy.

Till now, a body of literature has focused on applying differential privacy in (single-winner) voting rules~\citep{shang2014application,hay2017differentially,yan2020private}. These works primarily involve the application of various randomized mechanisms to some existing voting rules, evaluating the utility loss (usually measured by accuracy or mean square error) under some privacy guarantees, but overlook the satisfaction of the voting axioms. \citet{DBLP:conf/ijcai/Lee15} discussed the strategyproofness under DP, and~\citet{li2022differentially,li2023trading} explored the tradeoff between DP and several voting axioms, including Condorcet criterion, Pareto efficiency, SD-strategyproofness, etc.
However, in committee voting, the tradeoff between privacy and voting axioms is still unclear. Therefore, the following question remains largely open.

\begin{center}
  \em
  What is the tradeoff among DP and voting axioms for approval-based committee voting rules?
\end{center}

\begin{figure*}[h]
  \centering
  \includegraphics{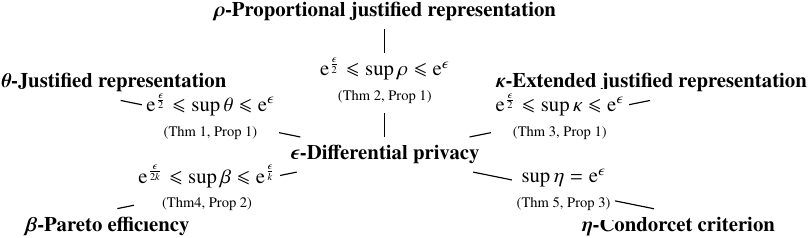}
  \caption{Two-way tradeoffs between DP and axioms.}
  \label{fig: 2-way tradeoff}
\end{figure*}
\begin{figure*}[h]
  \centering
  \includegraphics{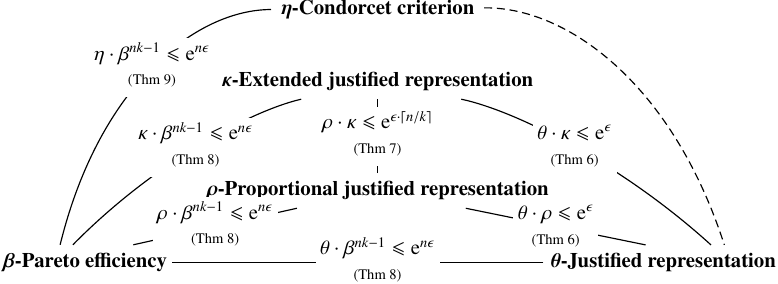}
  \caption{Three-way tradeoffs among DP and axioms (the dash line indicates that standard Condorcet criterion and justified representation are incompatible, so there is no 3-way tradeoff for them).}
  \label{fig: 3-tradeoff}
\end{figure*}

\subsection{Our contributions}

Our conceptual contributions encompass defining approximate versions of several axioms for randomized ABC rules, including the justified representation, proportional justified representation, extended justified representation, Pareto efficiency, and Condorcet criterion (Definitions \ref{def: approx-JR}--\ref{def: approx-CC}).

Our theoretical contributions are two-fold. Firstly, we investigate the 2-way tradeoff between DP and a single axiom. We provide tradeoff theorems (with both upper and lower bounds) between DP and approximate justified representation, proportional justified representation, extended justified representation, and Pareto efficiency (Theorems \ref{thm: 2-JR-upper}--\ref{thm: 2-PE-upper}, Propositions \ref{prop: 2-JR-lower}--\ref{prop: 2-PE-lower}). For Condorcet criterion, we provide a tight bound for its tradeoff with DP (Theorem \ref{thm: 2-CC-upper}, Proposition \ref{prop: 2-CC-lower}). We summarize this part of contributions in Figure \ref{fig: 2-way tradeoff}, where the expressions displayed on the lines illustrate the bounds of the approximate axioms under $\epsilon$-DP.

Secondly, we explore the 3-way tradeoffs between DP and pairwise combinations of the axioms. Since justified representation and Condorcet criterion are incompatible even without considering DP, we ignore this pair of axioms in 3-way tradeoffs. For the other pairwise combinations among the aforementioned axioms, we investigate their tradeoffs under DP by proving their upper bounds (Theorems \ref{thm: 3-JR-PJR-upper}--\ref{thm: 3-PE-CC-upper}). We show that their exists a tradeoff between each pair of the axioms, though they are compatible without DP. This part of results is summarized in Figure \ref{fig: 3-tradeoff}, where the expressions displayed on the lines illustrate the upper bounds of 3-way tradeoffs between axioms under $\epsilon$-DP.

Due to the space limit, we omit some proofs in the paper, which can be found in Appendix.

\subsection{Related work and discussions}

In the context of social choice, there is a large literature on approval-based committee voting~\citep{Faliszewski2017:Multiwinner,lackner2023multi}. Most of them focused on examining the satisfaction of standard forms of axiomatic properties, such as proportionality \citep{aziz2017justified,brill2023robust}, Pareto efficiency \citep{lackner2020utilitarian,aziz2020computing}, Condorcet criterion \citep{darmann2013hard}, monotonicity \citep{elkind2017properties,sanchez2019monotonicity}, and consistency \citep{lackner2018consistent}. For approximate axiomatic properties, \citet{procaccia2010can} discussed how much a strategyproof randomized rule could approximate a deterministic rule. \citet{birrell2011approximately} explored the approximate strategyproofness for randomized voting rules. \citet{xia2021semi} studied the semi-random satisfaction of voting axioms. For committee voting rules, \citet{jiang2020approximately}, \citet{munagala2022approximate}, and \citet{cheng2020group} studied the approximations of proportionality.

To the best of our knowledge, \citet{shang2014application} first explored the application of DP to rank aggregations (a generalized problem of voting) and derived upper bounds on error rate under DP.
In a similar vein, \citet{DBLP:conf/ijcai/Lee15} proposed a tournament voting rule that achieves both DP and approximate strategyproofness.
\citet{hay2017differentially} designed two novel rank aggregation algorithms based on Quicksort and Kemeny-Young method by applying the famous Laplace mechanism and exponential mechanism.
\citet{kohli2018epsilon} investigated DP and strategyproofness for anonymous voting rules on single-peaked preferences.
\citet{torra2019random} analyzed the condition, under which the random dictatorship, a well-known randomized single-wineer voting rule, satisfies DP. Further, their study proposed improvements to achieve it in general cases.
\citet{wang2019aggregating} analyzed the privacy of positional voting and proposed a new noise-adding mechanism that outperforms the naive Laplace mechanism in terms of accuracy.
\citet{yan2020private} addressed the tradeoff between accuracy and local DP in rank aggregation through Laplace mechanism and randomized response.

Most of the aforementioned works lack consideration for the tradeoff between privacy and voting axioms, which has been considered by the following works. \citet{ao2020private} introduced distributional DP~\citep{bassily2013coupled} to voting and studied the privacy level of several commonly used voting rules. \citet{li2022differentially} introduced a novel family of voting rules based on Condorcet method. They show that their rules can satisfy DP along with various voting axioms. Besides, they also explored the tradeoff between DP and several axioms, including Pareto efficiency, Condorcet criterion, etc. \citet{mohsin2022learning} studied how to design and learn voting rules that embed DP and some axiomatic properties. \citet{li2023trading} investigated both 2-way and 3-way tradeoffs among DP and voting axioms. Their analysis shows that DP not only constraints the achievable level single voting axioms, but also exacerbates the incompatibility between them. However, all of these works only focused on single-winner voting. The analysis for DP-axiom tradeoff in committee voting is still needed.

Beyond social choice, DP has also been considered in other topics of economics, such as mechanism design~\citep{pai2013privacy,xiao2013privacy}, auction~\citep{jamshidi2024differentially,jia2023incentivising}, matchings and resource allocation~\citep{hsu2016private,kannan2018private,angel2020private,chen2023differentially}.

\section{Preliminaries}
\label{sec: prelim}

Let $A=\{a_1,a_2,\ldots,a_m\}$ be a finite set of $m\geqslant 3$ distinct alternatives. Let $N = \{1,2,\ldots,n\}$ denote a finite set of $n$ voters. We assume that voters' preferences are approval-based, i.e., voters distinguish between alternatives they approve and those they disapprove. Therefore, each voter $i$'s ballot can be denoted by a (non-empty) set $P_i\subseteq A$, representing the set of alternatives approved by the $i$-th voter. Then the {\em (preference) profile}, i.e., the collection of all voters' preferences is denoted by $P=(P_1,P_2,\ldots,P_n)\in\powerset{A}^n$, where $\powerset{A}$ is the power set of $A$. In addition, we use $P_{-i}$ to denote the profile obtained by removing the $i$-th voter's preference from $P$.

In the paper, we are interested in committees of a specific size $k\in\N$. The input for choosing such a committee is called a {\em voting instance} $E=(P, k)$, consisting of a profile $P$ and a desired committee size $k$. Under the settings above, a (randomized) approval-based committee voting rule (ABC rule for short) is a mapping $f\colon \powerset{A}^n\times \N\to \mathcal{R}(\powerset{A})$, where $\mathcal{R}(\powerset{A})$ denotes the set of all random variables on $\powerset{A}$. For any $W\in\powerset{A}$, the winning probability of $W$ is denoted by $\p[f(P,k)=W]$. Then for any voting instance $(P,k)$, we have $\supp f(P,k)\subseteq \powerset{A,k}$, where $\supp f(P,k)$ denotes the support set of $f(P,k)$ and $\powerset{A,k}=\{W\subseteq A: |W|=k\}$. An ABC rule is neutral if for any voting instance $(P,k)$ and any permutation $\sigma$ on $A$, $\sigma\cdot f(P,k)=f(\sigma\cdot P,k)$.

\paragraph{Differential privacy (DP).} In a word, DP requires a function to return similar output when receiving similar inputs. Here, the similarity between inputs is ensured by the notion of {\em neighboring datasets}, i.e., two datasets differing on at most one record. Formally, DP is defined as follows.

\begin{definition}[\bf\boldmath $\epsilon$-Differential privacy, $\epsilon$-DP for short \citep{Dwork06D}]
    \label{def: dp}
    A mechanism $f:\mathcal{D} \to \mathcal{O}$ satisfies $\epsilon$-DP if for all $O\subseteq \mathcal{O}$ and each pair of neighboring databases $D,D'\in \mathcal{D}$,
    \begin{align*}
        \p[f(D)\in O] \leqslant \erm^{\epsilon}\cdot \p[f(D')\in O].
    \end{align*}
\end{definition}

The probability of the above inequality is taken over the randomness of the mechanism $f$. The smaller $\epsilon$ is, the better privacy guarantee can be offered. In the context of multi-winner voting, the mechanism $f$ in Definition \ref{def: dp} is an ABC rule and its domain $\mathcal{D}=\powerset{A}^n\times \N$. Further, the pair of neighboring databases $D=(P,k),D'=(P',k)$ are two voting instances with the same committee size $k$ and their profiles differ on at most one voter's vote, i.e., $P_{-j}=P_{-j}'$ holds for every voter $j\in N$.

\paragraph{Axioms of approval-based committee voting.} The specific axioms considered in the paper can be roughly divided into two parts, which are listed below, respectively.

The first part of axioms is usually called proportionality in committee voting, including justified representation, proportional justified representation, and extended justified representation. For the sake of simplicity, we call them {\em JR-family} throughout the paper. All of the axioms in JR-family rely on the notion of {\em $\ell$-cohesive group}. Given $\ell\geqslant 1$, a group of voters $V\subseteq N$ is $\ell$-cohesive if it satisfies $|V|\geqslant \ell\cdot \frac{n}{k}$ and $|\bigcap_{i\in V} P_i|\geqslant \ell$. Based on this definition, the axioms in JR-family are listed as follows.

\textbullet~{\bf Justified Representation (JR, \citep{aziz2017justified})}: A committee $W\in \Ak$ satisfies JR for voting instance $E=(P,k)$ if for any $1$-cohesive group $V$, there exists $i\in V$, such that $|P_i\cap W|\geqslant 1$. The family of committees satisfying JR for voting instance $(P,k)$ is denoted by $\JR(P,k)$. An ABC rule $f$ satisfies JR if $\supp f(P,k)\subseteq \JR(P,k)$ holds for any voting instance $(P,k)\in \powerset{A}\times \N$.

\textbullet~{\bf Proportional Justified Representation (PJR, \citep{sanchez2017proportional})}: A committee $W\in \Ak$ satisfies PJR for voting instance $E=(P,k)$ if for any $\ell$-cohesive group $V$, $|W\cap (\bigcup_{i\in V} P_i)|\geqslant \ell$. The family of committees satisfying PJR for voting instance $(P,k)$ is denoted by $\PJR(P,k)$. An ABC rule $f$ satisfies PJR if $\supp f(P,k)\subseteq \JR(P,k)$ holds for any voting instance $(P,k)\in \powerset{A}\times \N$.

\textbullet~{\bf Extended Justified Representation (EJR, \citep{aziz2017justified})}: A committee $W\in \Ak$ satisfies EJR for voting instance $E=(P,k)$ if for any $\ell$-cohesive group $V$, there exists $i\in V$, such that $|P_i\cap W|\geqslant \ell$. The family of committees satisfying EJR for voting instance $(P,k)$ is denoted by $\EJR(P,k)$. An ABC rule $f$ satisfies EJR if $\supp f(P,k)\subseteq \JR(P,k)$ holds for any voting instance $(P,k)\in \powerset{A}\times \N$.

Based on the implication relationship between axioms, there is a hierarchy in JR-family \citep{lackner2023multi}, shown in the following diagram, where $a\to b$ means that $a$ implies $b$, i.e., any rule satisfying $a$ satisfies $b$.
\begin{equation*}
    \begin{aligned}
        {\textstyle \text{({\it Strongest})}}\quad \text{EJR} ~\to~ \text{PJR} ~\to~ \text{JR} \quad{\textstyle \text{({\it Weakest})}}
    \end{aligned}
\end{equation*}

The second part of axioms can be regarded as efficiency notions, including Pareto efficiency and Condorcet criterion, defined as follows. Here, we adopt the definitions in \citep{lackner2023multi}.

\textbullet~ {\bf Pareto Efficiency (PE)}: An ABC rule $f$ satisfies Pareto efficiency if for any voting instance $E=(P,k)$, $f(E)$ is never Pareto dominated by any other committee, i.e., there does not exist committee $W'\in \Ak$ that satisfies
\begin{enumerate}
    \item for all $i\in N$, $|W'\cap P_i| \geqslant |f(E)\cap P_i|$;
    \item for some $i\in N$, $|W'\cap P_i| > |f(E)\cap P_i|$.
\end{enumerate}

\textbullet~ {\bf Condorcet Criterion (CC)}: An ABC rule satisfies Condorcet criterion if for any voting instance $E=(P,k)$ where a Condorcet committee $W$ exists, $f(E)=W$, where the Condorcet committee is a committee $W\in\Ak$ that for all $W'\in\Ak\backslash \{W\}$,
\begin{align*}
    \big|\{ i\in N: |P_i\cap W| > |P_i\cap W'| \}\big| > \frac{n}{2}.
\end{align*}

\section{Two-way Tradeoffs Between DP and Axioms}
\label{sec: 2-tradeoff}

This section investigates the tradeoffs between privacy and voting axioms for ABC rules (2-way tradeoff). To be more specific, we show that all of the axioms defined in Section \ref{sec: prelim} are incompatible with DP. Therefore, we propose approximate versions of them. With the approximate axioms, we establish both upper and lower bounds about their tradeoffs with DP, i.e., the upper and lower bounds of approximate axioms with a given privacy budget $\epsilon\in\mathbb{R}_+$.
All of the missing proofs in this section can be found in Appendix \ref{sec: appendix-A}.


\paragraph{The incompatibility.} First of all, we show the incompatibility between DP and the voting axioms for ABC rules. In fact, for any neutral ABC rule satisfying DP, we have the following lemma.

\begin{lemma}
  \label{lem: dp-neps}
  Given $\epsilon\in\R_+$, let $f\colon \powerset{A}^n\times \N\to \mathcal{R}(\powerset{A})$ be an ABC rule satisfying neutrality and $\epsilon$-DP. Then for any $(P,k)\in\powerset{A}^n\times\N$ and $W_1,W_2\in\Ak$,
  \begin{align*}
    \p[f(P,k)=W_1]\leqslant \erm^{n\epsilon}\cdot \p[f(P,k)=W_2].
  \end{align*}
\end{lemma}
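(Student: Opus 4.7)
\begin{pfsketch}
The plan is to combine neutrality (which lets us map one committee to another at the cost of relabelling the profile) with the standard group-privacy consequence of $\epsilon$-DP (which lets us convert a relabelled profile back to the original at a multiplicative cost of $e^{n\epsilon}$). The key observation is that any permutation of the alternatives changes the individual ballots $P_1,\ldots,P_n$ independently, so the relabelled profile differs from the original in at most $n$ coordinates.

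Concretely, I would proceed in three steps. First, since $|W_1|=|W_2|=k$, I pick a permutation $\sigma$ on $A$ with $\sigma(W_1)=W_2$. Second, I invoke neutrality: since $\sigma\cdot f(P,k)=f(\sigma\cdot P,k)$ as random variables on $\powerset{A}$, taking the probability of the singleton event $\{W_2\}$ yields
\begin{align*}
  \p[f(P,k)=W_1] \;=\; \p[\sigma\cdot f(P,k)=\sigma\cdot W_1] \;=\; \p[f(\sigma\cdot P,k)=W_2].
\end{align*}

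Third, I would chain together $\epsilon$-DP along a hop-by-hop path from $\sigma\cdot P$ to $P$. Writing $\sigma\cdot P=(\sigma(P_1),\ldots,\sigma(P_n))$, I define the sequence of profiles $P^{(0)}=\sigma\cdot P$, $P^{(1)},\ldots,P^{(n)}=P$, where $P^{(j)}$ is obtained from $P^{(j-1)}$ by replacing the $j$-th voter's ballot $\sigma(P_j)$ with $P_j$. Consecutive profiles in this chain are neighboring voting instances (they agree on the committee size $k$ and differ on at most one voter), so $\epsilon$-DP applied to the event $\{W_2\}$ gives $\p[f(P^{(j-1)},k)=W_2]\leqslant \erm^{\epsilon}\cdot\p[f(P^{(j)},k)=W_2]$ for each $j$. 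Telescoping across the $n$ hops yields
\begin{align*}
  \p[f(\sigma\cdot P,k)=W_2] \;\leqslant\; \erm^{n\epsilon}\cdot \p[f(P,k)=W_2],
\end{align*}
which, combined with the neutrality equality above, is exactly the claimed bound.

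There is no real obstacle here beyond bookkeeping: the lemma is essentially group privacy of size $n$ composed with a neutrality-induced identification of $W_1$ with $W_2$. The only point to be careful about is that the chain must respect the neighboring-instance definition used in the paper (same $k$, and at most one voter's ballot changed per hop), which the construction above satisfies by design.
\end{pfsketch}
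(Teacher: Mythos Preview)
Your proposal is correct and follows essentially the same approach as the paper: pick a permutation $\sigma$ sending $W_1$ to $W_2$, use neutrality to identify $\p[f(P,k)=W_1]$ with $\p[f(\sigma\cdot P,k)=W_2]$, and then apply $\epsilon$-DP along at most $n$ hops from $\sigma\cdot P$ back to $P$. The only cosmetic difference is that the paper chooses $\sigma$ to be an involution swapping $W_1$ and $W_2$ (which is not actually needed), whereas you take an arbitrary $\sigma$ with $\sigma(W_1)=W_2$ and are more explicit about the group-privacy chaining.
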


Then, for any voting instance $(P,k)$ and committee $W\in\Ak$, $\p[f(P,k)=W]$ must be strictly greater than $0$, otherwise we will obtain $\p[f(P,k)=W]=0$ for every $W\in \Ak$ by Lemma \ref{lem: dp-neps}, a contradiction. In other words, we have the following corollary.

\begin{corollary}
  \label{cor: dp}
  If a neutral ABC rule $f\colon \powerset{A}^n\times \N\to \mathcal{R}(\powerset{A})$ satisfies $\epsilon$-DP for some $\epsilon\in\R_+$,
  \begin{align*}
    \supp f(P,k)=\Ak,\quad\text{for all }(P,k)\in\powerset{A}^n\times\N.
  \end{align*}
\end{corollary}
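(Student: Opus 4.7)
The plan is to turn the informal contrapositive argument already hinted at in the paragraph preceding the statement into a clean two-line proof, using only Lemma~\ref{lem: dp-neps} and the fact that $f(P,k)$ is a probability distribution supported on $\Ak$.

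First I would fix an arbitrary voting instance $(P,k) \in \powerset{A}^n \times \N$. Since the paper has already established that $\supp f(P,k) \subseteq \Ak$ and since $\sum_{W \in \Ak} \p[f(P,k)=W] = 1$ with $|\Ak| < \infty$, there must exist at least one committee $W_0 \in \Ak$ with $\p[f(P,k)=W_0] > 0$. This is the only nontrivial input needed to start the chain.

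Next I would apply Lemma~\ref{lem: dp-neps} with $W_2 = W_0$ and $W_1 = W$ an arbitrary committee in $\Ak$. Rearranging the inequality gives
\begin{equation*}
  \p[f(P,k)=W] \;\geqslant\; \erm^{-n\epsilon} \cdot \p[f(P,k)=W_0] \;>\; 0,
\end{equation*}
where the strict inequality uses the choice of $W_0$ together with $\epsilon \in \R_+$ (so $\erm^{-n\epsilon} > 0$). Hence every $W \in \Ak$ lies in $\supp f(P,k)$, giving $\Ak \subseteq \supp f(P,k)$. Combined with the reverse inclusion that holds by definition of an ABC rule, this yields $\supp f(P,k) = \Ak$, as desired.

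There is essentially no obstacle here: the work is all done by Lemma~\ref{lem: dp-neps}, and the corollary is just the observation that a pairwise multiplicative comparison between \emph{all} pairs of outcomes forces either everything to be zero (impossible for a probability measure on a nonempty finite set) or everything to be strictly positive. The only point worth being careful about is to quantify properly over $(P,k)$ at the outset so that the conclusion holds uniformly, and to note explicitly that $\erm^{-n\epsilon}>0$ for finite $\epsilon$ so the lower bound is genuinely strict.
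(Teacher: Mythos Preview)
Your proposal is correct and is essentially identical to the paper's own argument: the paper just observes (in the paragraph introducing the corollary) that if some $\p[f(P,k)=W]$ were $0$ then Lemma~\ref{lem: dp-neps} would force all probabilities to be $0$, contradicting that $f(P,k)$ is a probability distribution on the finite set $\Ak$. You have simply unpacked the contrapositive by first fixing a $W_0$ with positive mass, which is the same idea.
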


Further, consider the voting instance $(P,k)$, where all voters approve the same $k$ alternatives, i.e., there exists a $W\in\Ak$ that $P_j=W$ holds for every $j\in N$. Then it is easy to check that every $W'\neq W$ is Pareto-dominated by $W$. As a result, for any ABC rule $f$ satisfying Pareto efficiency, $\supp f(P,k)=\{W\}$. Then, Corollary \ref{cor: dp} implies that Pareto efficiency cannot be achieved by any ABC rule satisfying DP\footnote{In fact, they are also incompatible in non-neutral cases. See Appendix \ref{subsec: appendix-A1}.}. Similarly, Condorcet criterion and axioms in JR-family are also incompatible with DP (see Appendix \ref{subsec: appendix-A1} for details). Therefore, in the rest of the section, we propose approximate versions of the aforementioned axioms, and discuss their tradeoffs with DP respectively.

\subsection{DP-Proportionality tradeoff}
\label{subsec: 2-JRs}

In this subsection, we discuss the tradeoffs between DP and axioms in JR-family. Intuitively, axioms in JR-family requires the ABC rule to distinguish some of the committees from others. For example, for any ABC rule $f$ satisfying JR, we have $\supp f(P,k)\subseteq \JR(P,k)$ for every voting instance $(P,k)\in\powerset{A}\times\N$, which indicates that a committee $W\in\Ak$ has chance to win the election if and only if $W\in\JR(P,k)$. In other words, any ABC rule $f$ satisfying JR is devoted to distinguish JR committees from non-JR ones. However, Corollary \ref{cor: dp} indicates that JR is unachievable under DP, since for most voting instances $(P,k)$ , $\JR(P,k)\neq \Ak$. Therefore, under DP, we can distinguish JR committees from others only by assigning them different probabilities of winning. Further, the lower bound of the ratio between JR and non-JR committees represents the strength of distinction, i.e., the level of satisfaction with JR. Similar methods also work for PJR and EJR. In the rest of this subsection, the tradeoffs between DP and the axioms in JR-family will be discussed one by one.

\paragraph{DP against JR.}
Firstly, we investigate the tradeoff between DP and JR. To measure how much an ABC rule can approximately approach JR, we propose the following definition, where the parameter $\theta\in\R_+$ represents the level of satisfaction to JR.

\begin{definition}[\bf\boldmath 
  $\theta$-Justified representation, $\theta$-JR for short]
    \label{def: approx-JR}
    Given $\theta\in\mathbb{R}_+$, an ABC rule $f$ satisfies $\theta$-JR if for any voting instance $E=(P,k)$, $W_1\in\JR(P,k)$, and $W_2\notin\JR(P,k)$,
    \begin{align*}
      \p[f(P,k)=W_1] \geqslant \theta\cdot \p[f(P,k)=W_2].
    \end{align*}
\end{definition}

In a word, an ABC rule satisfies $\theta$-JR if the winning probability of any JR committee is at least $\theta$ times the winning probability of any non-JR committee. Therefore, a larger $\theta$ is more desirable and represents a higher level of JR. Notably, $\infty$-JR reduces to its standard form. Further, $\theta$-JR provides a lower bound (which increases with the growth of $\theta$) of the probability that $f(P,k)$ satisfies JR, i.e., 
\begin{align*}
  \p[f(P,k)\in\JR(A,k)] &\geqslant \frac{\theta\cdot |\JR(P,k)|}{\theta\cdot |\JR(P,k)|+\binom{m}{k}-|\JR(P,k)|} \\
  &\geqslant \frac{\theta}{\theta+\binom{m}{k}-1}.
\end{align*}

With this definition, we are ready to discuss the tradeoff between DP and JR. In fact, by Lemma \ref{lem: dp-neps}, we can immediately obtain a trivial upper bound of $\theta$-JR under $\epsilon$-DP, i.e., $\theta\leqslant \erm^{n\epsilon}$. Further, we observe that a pair of neighboring voting instances can have different JR committees, and both of the two committees are unique to their corresponding profiles. Therefore, we have the following theorem, which provides an upper bound much tighter than the trivial upper bound $\erm^{n\epsilon}$.

\begin{theorem}[\bf\boldmath $\theta$-JR, upper bound]
  \label{thm: 2-JR-upper}
  Given any $\epsilon\in\R_+$, there is no ABC rule satisfying $\epsilon$-DP and $\theta$-JR with $\theta>\erm^\epsilon$.
\end{theorem}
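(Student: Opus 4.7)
The plan is to sharpen the trivial $\erm^{n\epsilon}$ bound from Lemma \ref{lem: dp-neps} by identifying a single-voter perturbation that swaps the JR status of two specific committees, so that only one factor of $\erm^\epsilon$ (instead of $n$) accumulates. Concretely, I would work in the regime $m\geqslant k+1$ and set $n=k$, letting voter $i$ approve $\{a_i\}$ for every $i\in\{1,\ldots,k\}$ in profile $P$. Because $n/k=1$, every single voter forms a $1$-cohesive group, so JR forces every $a_i$ into the winning committee, which together with $|W|=k$ pins down $W_1=\{a_1,\ldots,a_k\}$ as the unique JR committee of $(P,k)$. The neighbor $P'$, obtained by replacing voter $1$'s ballot with $\{a_{k+1}\}$, is symmetric and has unique JR committee $W_2=\{a_2,\ldots,a_{k+1}\}$, with $W_1\notin\JR(P',k)$ and $W_2\notin\JR(P,k)$.

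Next I would chain $\theta$-JR on each instance with $\epsilon$-DP on each of the two committees. From Definition \ref{def: approx-JR},
\begin{align*}
\p[f(P,k)=W_1] &\geqslant \theta\cdot\p[f(P,k)=W_2], \\
\p[f(P',k)=W_2] &\geqslant \theta\cdot\p[f(P',k)=W_1],
\end{align*}
while $\epsilon$-DP applied to the neighboring pair yields
\begin{align*}
\p[f(P,k)=W_2] &\geqslant \erm^{-\epsilon}\p[f(P',k)=W_2], \\
\p[f(P',k)=W_1] &\geqslant \erm^{-\epsilon}\p[f(P,k)=W_1].
\end{align*}
Substituting these in order gives $\p[f(P,k)=W_1]\geqslant \theta^2\erm^{-2\epsilon}\p[f(P,k)=W_1]$, so once one shows the left-hand side is strictly positive, division yields $\theta\leqslant \erm^\epsilon$, contradicting the hypothesis $\theta>\erm^\epsilon$.

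Positivity of $\p[f(P,k)=W_1]$ holds here without invoking Corollary \ref{cor: dp} (which would require neutrality): if it vanished, the $\theta$-JR inequality would force $\p[f(P,k)=W]=0$ for every $W\neq W_1$ as well, contradicting $\sum_{W\in\Ak}\p[f(P,k)=W]=1$. The main obstacle, and the only step requiring genuine thought, is the construction: I need a neighboring pair whose \emph{unique} JR committees differ, since the $\theta$-JR inequality is only informative when a committee that was JR becomes non-JR (and vice versa) under a single-voter flip. The singleton-approval construction above is the cleanest way to make both $\JR(P,k)$ and $\JR(P',k)$ collapse to a single committee that differs by exactly one alternative, perfectly matching the one-voter change; variants handle other $n,m$ by tuning the cohesive-group threshold $n/k$.
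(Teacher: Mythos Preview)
Your argument is correct and follows the same overall strategy as the paper: construct neighboring instances $(P,k)$ and $(P',k)$ together with committees $W_1,W_2$ such that $W_1\in\JR(P,k)\setminus\JR(P',k)$ and $W_2\in\JR(P',k)\setminus\JR(P,k)$, then alternate $\theta$-JR and $\epsilon$-DP to obtain $\p[f(P,k)=W_1]\geqslant \theta^2\erm^{-2\epsilon}\,\p[f(P,k)=W_1]$. The only real difference is the concrete instance. The paper keeps $k$ small (their construction already works with $k=2$ for every $m\geqslant 3$ and every $n$): two blocks of voters of sizes $s=\lceil n/k\rceil$ and $s-1$ approve $\{a_1\}$ and $\{a_2\}$ respectively, and flipping the pivotal $s$-th voter moves the cohesive group from the $a_1$-block to the $a_2$-block. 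Your instance instead takes $k=n$ so that every single voter is $1$-cohesive, which forces the JR committee to be \emph{unique}; this is what lets you give the clean positivity argument (a point the paper glosses over). The tradeoff is that your specific construction needs $m\geqslant n+1$, so the hand-waved ``variants handle other $n,m$'' is doing real work when $m\leqslant n$; the paper's instance avoids that caveat by working at $k=2$.
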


\begin{pfsketch}
  Consider a pair of neighboring voting instances $(P,k)$ and $(P',k)$, where $s=\lceil n/k\rceil$:
  \begin{align*}
    P_j &= \begin{cases}
      \{a_1\}, & 1\leqslant j\leqslant s, \\
      \{a_2\}, & s+1\leqslant j\leqslant 2s-1, \\
      A\backslash\{a_1,a_2\}, & \text{otherwise}.
    \end{cases} \\
    P'_j &= \begin{cases}
      \{a_1\}, & 1\leqslant j\leqslant s-1, \\
      \{a_2\}, & s\leqslant j\leqslant 2s-1, \\
      A\backslash\{a_1,a_2\}, & \text{otherwise}.
    \end{cases}
  \end{align*}
  Then $W=\{a_1,a_3,a_4,\ldots,a_{k+1}\}$ satisfies JR only for $(P,k)$, while $W'=\{a_2,a_3,a_4,\ldots,a_{k+1}\}$ satisfies JR only for $(P',k)$. Letting $f$ be an ABC rule satisfying $\epsilon$-DP and $\theta$-JR, we can prove that
  \begin{align*}
    \p[f(P,k)=W] &\geqslant \theta\cdot \erm^{-\epsilon}\cdot \p[f(P',k)=W'] \\
    &\geqslant \theta^2\cdot \erm^{-2\epsilon}\cdot \p[f(P,k)=W],
  \end{align*}
  i.e., $\theta\leqslant \erm^{\epsilon}$, which completes the proof.
\end{pfsketch}

In other words, for any ABC rule satisfying $\epsilon$-DP, its JR level will not exceed $\erm^\epsilon$. Further, we find that $\erm^{\epsilon/2}$ can be achieved by randomized response mechanism , which provides a lower bound of the achievable lower bound of $\theta$-JR under DP. This will be discussed in the last part of the subsection.

\paragraph{DP against PJR.} Secondly, we investigate the tradeoff between DP and PJR. Similar to $\theta$-JR, we design the approximate PJR by introducing a parameter $\rho\in\R_+$, which represents the level of PJR.

\begin{definition}[\bf\boldmath $\rho$-Proportional justified representation, $\rho$- PJR for short]
    \label{def: approx-PJR}
    Given $\rho\in\mathbb{R}_+$, an ABC rule $f$ satisfies $\rho$-PJR if for any voting instance $E=(P,k)$, $W_1\in\PJR(P,k)$, and $W_2\notin\PJR(P,k)$,
    \begin{align*}
      \p[f(P,k)=W_1] \geqslant \rho\cdot \p[f(P,k)=W_2].
    \end{align*}
\end{definition}

In Definition \ref{def: approx-PJR}, the parameter $\rho$ captures the lower bound for the ratio between the winning probability of each pair of PJR and non-PJR committees. Therefore, a larger $\rho$ is more desirable. In particular, when $\rho$ goes to infinity, $\rho$-PJR reduces to its standard form. Accroding to Lemma \ref{lem: dp-neps}, we can derive a trivial upper bound for $\rho$-PJR, i.e., $\rho\leqslant \erm^{n\epsilon}$, and a tighter bound is described in the following theorem.

\begin{theorem}[\bf\boldmath $\rho$-PJR, upper bound]
  \label{thm: 2-PJR-upper}
  Given any $\epsilon\in\R_+$, there is no ABC rule satisfying $\epsilon$-DP and $\rho$-PJR with $\rho>\erm^\epsilon$.
\end{theorem}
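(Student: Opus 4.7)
The plan is to mimic the proof strategy of Theorem~\ref{thm: 2-JR-upper}: construct a pair of neighboring voting instances $(P,k)$ and $(P',k)$ together with two committees $W,W'$ such that each is a PJR committee for exactly one of the two profiles, and then chain $\rho$-PJR with $\epsilon$-DP twice to force $\rho^2\leqslant \erm^{2\epsilon}$.

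Concretely, I would reuse the very same neighboring instances from the sketch of Theorem~\ref{thm: 2-JR-upper}, with $s=\lceil n/k\rceil$, and the same witnesses $W=\{a_1,a_3,\ldots,a_{k+1}\}$ and $W'=\{a_2,a_3,\ldots,a_{k+1}\}$. The salient features of this construction are that in $P$ the $s$ voters approving $\{a_1\}$ form a $1$-cohesive group while the $s-1$ voters approving $\{a_2\}$ are too few to be cohesive, and in $P'$ the roles of $\{a_1\}$- and $\{a_2\}$-voters are swapped.

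The main obstacle relative to the JR argument is that PJR is strictly stronger, so besides the singleton $1$-cohesive groups I must also rule out PJR violations by $\ell$-cohesive groups with $\ell\geqslant 2$. In this construction the only candidates are subgroups of the voters approving $A\setminus\{a_1,a_2\}$: any group mixing those voters with the $\{a_1\}$- or $\{a_2\}$-voters has empty intersection, and the $\{a_1\}$- or $\{a_2\}$-blocks alone have intersection of size at most $1$. For such a subgroup $V$, the union $\bigcup_{i\in V}P_i$ equals $A\setminus\{a_1,a_2\}$, and its intersection with $W$ is the block $\{a_3,\ldots,a_{k+1}\}$ of size $k-1$. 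A short bound combining $|V|\leqslant n-2s+1$ with $|V|\geqslant \ell\cdot n/k$ yields $\ell\leqslant k-1$, so the PJR condition $|W\cap\bigcup_{i\in V}P_i|\geqslant\ell$ is automatic. This gives $W\in\PJR(P,k)$, and by symmetry $W'\in\PJR(P',k)$. The non-memberships $W\notin\PJR(P',k)$ and $W'\notin\PJR(P,k)$ are inherited for free from the JR case, since PJR implies JR and these committees already fail JR on the swapped profiles.

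Once the witnesses are in place, the conclusion follows by the same two-step chain as in Theorem~\ref{thm: 2-JR-upper}: apply $\rho$-PJR on $(P,k)$, push to $(P',k)$ via $\epsilon$-DP, apply $\rho$-PJR on $(P',k)$, and push back to $(P,k)$ via $\epsilon$-DP once more. The resulting inequality $\p[f(P,k)=W]\geqslant \rho^2\erm^{-2\epsilon}\p[f(P,k)=W]$, together with the fact that $\p[f(P,k)=W]>0$ from Corollary~\ref{cor: dp}, forces $\rho\leqslant \erm^\epsilon$.
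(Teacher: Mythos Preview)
Your proposal is correct but takes a genuinely different route from the paper's own proof. You recycle the neighboring instances and witnesses $W=\{a_1,a_3,\ldots,a_{k+1}\}$, $W'=\{a_2,a_3,\ldots,a_{k+1}\}$ from the JR argument (Theorem~\ref{thm: 2-JR-upper}) and add the extra step of bounding the cohesiveness of subgroups of the ``otherwise'' block approving $A\setminus\{a_1,a_2\}$; your count $\ell\leqslant k(n-2s+1)/n<k$, hence $\ell\leqslant k-1=|W\cap(A\setminus\{a_1,a_2\})|$, is correct and closes the PJR verification. The paper instead builds a \emph{new} pair of neighboring profiles in which the voters split into $k$ equal blocks approving singletons $\{a_1\},\ldots,\{a_k\}$, with the sole difference being that voter~$1$ additionally approves $a_{k+1}$ in $P$ versus $a_{k+2}$ in $P'$; the witnesses are $W=\{a_2,\ldots,a_k,a_{k+1}\}$ and $W'=\{a_2,\ldots,a_k,a_{k+2}\}$. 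That construction is tailored so that no $\ell$-cohesive group with $\ell\geqslant 2$ exists at all, which makes the PJR check immediate, whereas your reuse of the JR instance is more economical but costs the block-3 cohesiveness analysis. Both approaches yield the identical chain $\p[f(P,k)=W]\geqslant\rho^2\erm^{-2\epsilon}\p[f(P,k)=W]$ and hence $\rho\leqslant\erm^\epsilon$.

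One small remark: your appeal to Corollary~\ref{cor: dp} for $\p[f(P,k)=W]>0$ presupposes neutrality, which Theorem~\ref{thm: 2-PJR-upper} does not assume. The paper's proof likewise does not address positivity explicitly, so this is not a gap relative to the paper's level of rigor; but strictly speaking you should either drop the reference to Corollary~\ref{cor: dp} or note that the chain forces $\p[f(P,k)=W]=0$ whenever $\rho>\erm^\epsilon$, and argue from there.
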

\begin{pfsketch}
  Consider the following voting instances $(P,k)$ and $(P',k)$, where $n=s\cdot k$:
  \begin{align*}
    P_j &= \begin{cases}
      \{a_1,a_{k+1}\}, & j=1, \\
      \{a_1\}, & 2\leqslant j\leqslant s, \\
      \{a_2\}, & s+1\leqslant j\leqslant 2s, \\
      ~\cdots \\
      \{a_k\}, & m-k+1\leqslant j\leqslant m.
    \end{cases} \\
    P_j' &= \begin{cases}
      \{a_1,a_{k+2}\}, & j=1, \\
      \{a_1\}, & 2\leqslant j\leqslant s, \\
      \{a_2\}, & s+1\leqslant j\leqslant 2s, \\
      ~\cdots \\
      \{a_k\}, & m-k+1\leqslant j\leqslant m.
    \end{cases}
  \end{align*}
  Let $W=\{a_2,a_3,\ldots,a_k,a_{k+1}\}$ and $W'=\{a_2,a_3,\ldots,a_k,a_{k+2}\}$.
  Then for any $f\colon \themap$ satisfying $\epsilon$-DP and $\rho$-PJR, we can obtain that
  \begin{align*}
    \p[f(P,k)=W] &\geqslant \rho\cdot \erm^{-\epsilon}\cdot \p[f(P',k)=W'] \\
    &\geqslant \rho^2\cdot \erm^{-2\epsilon}\cdot \p[f(P,k)=W],
  \end{align*}
  i.e., $\rho\leqslant \erm^{\epsilon}$, which completes the proof.
\end{pfsketch}

\paragraph{DP against EJR.} Thirdly, we investigate the tradeoff between DP and EJR. Similar to $\theta$-JR and $\rho$-PJR, we have the following definition.

\begin{definition}[\bf\boldmath $\kappa$-Extended justified representation, $\kappa$-EJR for short]
    \label{def: approx-EJR}
    Given $\kappa\in\mathbb{R}_+$, an ABC rule $f$ satisfies $\kappa$-EJR if for any voting instance $E=(P,k)$, $W_1\in\EJR(P,k)$, and $W_2\notin\EJR(P,k)$,
    \begin{align*}
      \p[f(P,k)=W_1] \geqslant \kappa\cdot \p[f(P,k)=W_2].
    \end{align*}
\end{definition}

In Definition \ref{def: approx-EJR}, a larger $\kappa$ is more desirable, since it represents a higher level of EJR. Especially, $\infty$-EJR is equivalent to the standard EJR. Similar to $\theta$-JR and $\rho$-PJR, Lemma \ref{lem: dp-neps} also provides a trivial upper bound for $\kappa$-EJR, i.e., $\kappa\leqslant \erm^{n\epsilon}$. Further, the following theorem provides a tighter bound.

\begin{theorem}[\bf\boldmath $\kappa$-EJR, upper bound]
  \label{thm: 2-EJR-upper}
  Given any $\epsilon\in\R_+$, there is no ABC rule satisfying $\epsilon$-DP and $\kappa$-EJR with $\kappa>\erm^{\epsilon\lceil\frac{n}{k}\rceil}$.
\end{theorem}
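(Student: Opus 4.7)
The plan is to adapt the bouncing argument used in Theorems~\ref{thm: 2-JR-upper} and~\ref{thm: 2-PJR-upper}, but to exploit the group-privacy form of $\epsilon$-DP: along a chain of $s:=\lceil n/k\rceil$ single-voter modifications, iterating Definition~\ref{def: dp} yields $\p[f(P,k)=W]\leqslant \erm^{\epsilon s}\p[f(P',k)=W]$ for every $W\in\Ak$, which is exactly the factor the theorem asks for. So I only need a pair of voting instances $(P,k),(P',k)$ whose profiles differ on exactly $s$ voters, together with committees $W,W'$ whose EJR status is swapped between the two instances.

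The construction is designed so that a $2$-cohesive group appears in $(P,k)$ and is destroyed in $(P',k)$. Assume $n=ks$ and $m\geqslant k+2$ (other cases can be patched with a small number of filler voters whose ballots lie outside every cohesive group), and define
\begin{align*}
  P_j = \begin{cases}\{a_1,a_2\}, & 1\leqslant j\leqslant 2s,\\ \{a_t\}, & (t-1)s+1\leqslant j\leqslant ts,\ 3\leqslant t\leqslant k,\end{cases}
\end{align*}
and let $P'$ be obtained from $P$ by re-labelling the second half of the first block, i.e.\ voters $s+1,\ldots,2s$ now approve $\{a_{k+1},a_{k+2}\}$ instead of $\{a_1,a_2\}$. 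In $(P,k)$ the block $\{1,\ldots,2s\}$ is $2$-cohesive for $\{a_1,a_2\}$, and together with the $1$-cohesive singletons for $a_3,\ldots,a_k$ this forces $W=\{a_1,a_2,a_3,\ldots,a_k\}$ to be the unique EJR committee. In $(P',k)$ the $2$-cohesive group is destroyed---no block of $2s$ voters shares two common alternatives---so only $1$-cohesive constraints remain and $W'=\{a_1,a_{k+1},a_3,\ldots,a_k\}$ satisfies EJR. One then checks that $W'\notin\EJR(P,k)$ (because $|P_j\cap W'|=1<2$ for every $j$ in the $2$-cohesive group) and $W\notin\EJR(P',k)$ (because $W$ contains neither $a_{k+1}$ nor $a_{k+2}$).

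Combining the two $\kappa$-EJR inequalities (at $P$ and at $P'$) with the two group-privacy inequalities in a cycle then yields
\begin{align*}
  \p[f(P,k)=W] &\geqslant \kappa\cdot\p[f(P,k)=W'] \geqslant \kappa\erm^{-\epsilon s}\p[f(P',k)=W']\\
  &\geqslant \kappa^2\erm^{-\epsilon s}\p[f(P',k)=W] \geqslant \kappa^2\erm^{-2\epsilon s}\p[f(P,k)=W],
\end{align*}
from which $\kappa\leqslant \erm^{\epsilon s}=\erm^{\epsilon\lceil n/k\rceil}$ follows exactly as in the two earlier theorems.

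The main technical hurdle is the EJR bookkeeping on the $(P',k)$ side: I must rule out every unintended $\ell$-cohesive group with $\ell\geqslant 2$ that could survive the modification and impose an extra constraint on $W'$, so that the binding constraints on EJR at $(P',k)$ really are just the $1$-cohesive singletons. The divisibility-free case $n\not\equiv 0\pmod k$ and the small-$m$ case are handled by the usual filler-voter padding and do not introduce new ideas; the core of the argument is the group-privacy factor $\erm^{\epsilon s}$ together with the $2$-cohesive-group construction above.
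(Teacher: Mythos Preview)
Your proposal is correct and follows the same high-level strategy as the paper: build two instances $(P,k)$ and $(P',k)$ that differ on exactly $s=\lceil n/k\rceil$ ballots, exhibit committees $W,W'$ whose EJR status is swapped between the two instances, and close the $\kappa$-EJR/group-privacy cycle to get $\kappa\leqslant\erm^{\epsilon s}$. Your chain of four inequalities is exactly the one the paper uses.

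The difference is in the construction. The paper uses only singleton ballots: each block of $s$ voters approves a single alternative $\{a_t\}$, and $P'$ is obtained by changing the first block from $\{a_1\}$ to $\{a_{k+1}\}$. With singleton ballots there is \emph{no} $\ell$-cohesive group for $\ell\geqslant 2$ in either profile, so EJR collapses to JR and the unique EJR committee in each instance is immediate. Your construction instead creates a $2$-cohesive group in $(P,k)$ and destroys it in $(P',k)$; this also works (your verifications that $W'\notin\EJR(P,k)$ and $W\notin\EJR(P',k)$ are correct), but it is precisely what forces you into the ``technical hurdle'' of ruling out unintended higher-order cohesive groups on the $(P',k)$ side. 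The paper's singleton-ballot construction sidesteps that hurdle entirely, at no cost. Both approaches need the same $n=ks$ assumption (the paper also asserts it ``without loss of generality'').
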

\begin{pfsketch}
  Consider the following voting instances $(P,k)$ and $(P',k)$, where $s=\lceil n/k\rceil$:
  \begin{align*}
    P_j &= \begin{cases}
      \{a_1\}, & 1\leqslant j\leqslant s, \\
      \{a_2\}, & s+1\leqslant j\leqslant 2s \\
      ~\cdots \\
      \{a_k\}, & (k-1)s+1\leqslant j\leqslant m.
    \end{cases} \\
    P_j' &= \begin{cases}
      \{a_{k+1}\}, & 1\leqslant j\leqslant s, \\
      \{a_2\}, & s+1\leqslant j\leqslant 2s \\
      ~\cdots \\
      \{a_k\}, & (k-1)s+1\leqslant j\leqslant m.
    \end{cases}
  \end{align*}
  Let $W=\{a_1,a_2,\ldots,a_k\}$ and $W'=\{a_{k+1},a_2,a_3,\ldots,a_k\}$.
  Then we can prove that for any $f\colon \themap$ satisfying $\epsilon$-DP and $\kappa$-EJR,
  \begin{align*}
    \p[f(P,k)=W] &\geqslant \kappa\cdot \erm^{-\lceil\frac{n}{k}\rceil\cdot\epsilon}\cdot \p[f(P',k)=W'] \\
    &\geqslant \kappa^2\cdot \erm^{-2\lceil\frac{n}{k}\rceil\cdot\epsilon}\cdot \p[f(P,k)=W],
  \end{align*}
  i.e., $\kappa\leqslant \erm^{-2\lceil\frac{n}{k}\rceil\cdot\epsilon}$, which completes the proof.
\end{pfsketch}

\paragraph{The lower bounds.} Finally, we discuss the lower bounds for the achievable level of approximate axioms in JR-family under $\epsilon$-DP. In fact, by assigning winning probability to committees dichotomously, the following mechanism achieves a lower bound of approximate JR/PJR/EJR under DP.

\begin{algorithm}[t]
  \caption{Randomized response for JR-family}
  \label{algo: RR-JR}
  \KwIn{Profile $P$, Committee Size $k$, Noise Level $\epsilon$, Axiom $\mathcal{A}$}
  \KwOut{Winning Committee $W_{win}$}
  \For{$W\in \Ak$}{
    \If{$W$ satisfies $\mathcal{A}$}{
      $\chi(W)\leftarrow 1$\;
    }\Else{
      $\chi(W)\leftarrow 0$\;
    }
  }
  Compute the probability distribution $p$, such that $p(W)\propto \erm^{\chi(W)\cdot\epsilon/2}$ for all $W\in\Ak$\;
  Sample $W_{win}\sim p$\;
  \Return{$W_{win}$}
\end{algorithm}

Intuitively, Mechanism \ref{algo: RR-JR} equally increases the winning probability of every committee satisfying JR/PJR/EJR. Formally, we have the following proposition.

\begin{proposition}
  \label{prop: 2-JR-lower}
  For any noise level $\epsilon\in\R_+$, voting axiom $\mathcal{A}\in\{\JR, \PJR, \EJR\}$ and voting instance $(P,k)\in\powerset{A}^n\times\N$, Mechanism \ref{algo: RR-JR} satisfies $\epsilon$-DP and $\erm^{\epsilon/2}$-$\mathcal{A}$.
\end{proposition}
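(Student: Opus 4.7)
The plan is to prove both properties by directly analyzing the explicit probability formula induced by Mechanism~\ref{algo: RR-JR}. Writing $N_P = |\mathcal{A}(P,k)|$ and $M = \binom{m}{k}$, the normalization in the mechanism gives the closed form
\begin{align*}
  \p[f(P,k) = W] = \frac{\erm^{\chi_P(W)\,\epsilon/2}}{Z_P},\qquad Z_P = N_P\,\erm^{\epsilon/2} + (M - N_P),
\end{align*}
where $\chi_P(W) \in \{0,1\}$ is the indicator that $W$ satisfies $\mathcal{A}$ at $P$. Everything in the proof reduces to manipulating this expression.

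The $\erm^{\epsilon/2}$-$\mathcal{A}$ property is immediate: for $W_1 \in \mathcal{A}(P,k)$ and $W_2 \notin \mathcal{A}(P,k)$ we have $\chi_P(W_1) = 1$ and $\chi_P(W_2) = 0$, so the ratio $\p[f(P,k)=W_1]/\p[f(P,k)=W_2]$ equals exactly $\erm^{\epsilon/2}$, which matches Definitions~\ref{def: approx-JR}--\ref{def: approx-EJR} (when instantiated with $\mathcal{A}$). No combinatorial properties of JR/PJR/EJR are used here beyond their definitions.

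For $\epsilon$-DP, fix neighboring instances $(P,k),(P',k)$ and any $W \in \Ak$. I would first factor the per-outcome ratio as
\begin{align*}
  \frac{\p[f(P,k)=W]}{\p[f(P',k)=W]} = \erm^{(\chi_P(W) - \chi_{P'}(W))\,\epsilon/2}\cdot\frac{Z_{P'}}{Z_P}.
\end{align*}
The first factor is at most $\erm^{\epsilon/2}$ since $\chi_P(W) - \chi_{P'}(W) \in \{-1,0,1\}$, so the whole task is to show that the normalizer ratio $Z_{P'}/Z_P$ is bounded by $\erm^{\epsilon/2}$. Writing $Z_P = M + N_P(\erm^{\epsilon/2} - 1)$ and similarly for $Z_{P'}$, and using only the trivial bounds $0 \le N_P, N_{P'} \le M$, the ratio is maximized when $N_{P'} = M$ and $N_P = 0$, yielding exactly $\erm^{\epsilon/2}$. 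Multiplying the two factors gives the desired $\erm^\epsilon$.

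The only nontrivial step is the bound on $Z_{P'}/Z_P$, and I expect it to be the main (though still easy) obstacle, because a priori the number of axiom-satisfying committees could swing substantially when a single voter changes their ballot. The cleanness of the argument comes from the observation that the normalizer bound holds uniformly over all $N_P,N_{P'}\in[0,M]$, so no case analysis about how $\mathcal{A}(P,k)$ changes between neighboring profiles is needed -- the same proof therefore covers all three choices $\mathcal{A} \in \{\JR,\PJR,\EJR\}$ simultaneously.
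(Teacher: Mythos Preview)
Your proposal is correct and follows essentially the same approach as the paper's proof: both write the explicit closed-form probability with normalizer $Z_P = N_P\,\erm^{\epsilon/2} + (M-N_P)$, read off the $\erm^{\epsilon/2}$-$\mathcal{A}$ property directly from the ratio of numerators, and obtain $\epsilon$-DP by bounding the numerator ratio by $\erm^{\epsilon/2}$ and the normalizer ratio $Z_{P'}/Z_P$ by $\erm^{\epsilon/2}$ via the trivial extremes $N_{P'}=M$, $N_P=0$. The only cosmetic difference is that you factor the ratio explicitly and treat all three axioms uniformly, whereas the paper writes out the JR case and remarks that PJR/EJR are identical.
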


\subsection{DP-Efficiency tradeoff}

\label{subsec: 2-eff}

In this subsection, we discuss the tradeoffs between DP and efficiency axioms, includeing Pareto efficiency (PE) and Condorcet criterion (CC). For their approximate versions, we extend the probabilistic Pareto efficiency and probabilistic Condorcet criterion in \citep{li2022differentially}.

\paragraph{DP against PE.} First, we discuss the tradeoff between DP and PE. As is mentioned previously, PE requires the ABC rule to avoid selecting a Pareto-dominated committee as the winner for any input profile, which cannot be achieved under DP. Therefore, we construct the following definition, where we introduce a parameter $\beta$ to quantify the level of Pareto efficiency.

\begin{definition}[\bf\boldmath $\beta$-Pareto efficiency, $\beta$-PE for short]
    \label{def: beta-pareto}
    Given $\beta\in\R_+$, an ABC rule $f$ satisfies Pareto efficiency if for all voting instance $E=(P,k)$, where exist two distinct committee $W_1, W_2\in\powerset{A}$ that $W_1$ Pareto dominates $W_2$, 
    \begin{align*}
        \p[f(E)=W_1] \geqslant \beta\cdot \p[f(E)=W_2].
    \end{align*}
\end{definition}

In other words, an ABC rule satisfies $\beta$-PE if for each pair of Pareto-dominant and dominated committees, the ratio of their winning probability can be lower bounded by $\beta$. Therefore, a larger $\beta$ is more desirable, since it represents a higher level of Pareto efficiency. It is also easy to check that $\beta$-PE reduces to the standard Pareto efficiency when $\beta$ goes to infinity.

Now, we can discuss the tradeoff between DP and Pareto efficiency in the sense of Definition \ref{def: beta-pareto}. Formally, the following theorem provides an upper bound of this tradeoff.

\begin{theorem}[\bf\boldmath $\beta$-PE, upper bound]
    \label{thm: 2-PE-upper}
    Given any $\epsilon\in\R_+$, there is no neutral ABC rule satisfying $\epsilon$-DP and $\beta$-PE with $\beta>\erm^{\frac{\epsilon}{k}}$.
\end{theorem}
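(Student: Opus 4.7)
The plan is to mirror the template of Theorems~\ref{thm: 2-JR-upper}--\ref{thm: 2-EJR-upper}: find a voting instance together with two committees whose relative winning probability is constrained on one side by $\beta$-PE and on the other by $\epsilon$-DP across a suitable neighboring instance, then close a circular inequality. What is new here is that the desired bound $\beta\le\erm^{\epsilon/k}$ requires accumulating a factor of $\beta^k$ from Pareto efficiency while paying only a single factor of $\erm^{\epsilon}$ from DP. So the construction must realize a chain $W_0\succ W_1\succ\cdots\succ W_k$ of $k$ consecutive Pareto-dominances \emph{within one profile}, and close the loop using neutrality instead of chaining many neighboring profiles.

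The construction I would use makes the dominance chain driven by a single pivotal voter, with the remaining voters Pareto-indifferent. Assuming $m\ge 2k$, take $P$ to be the profile in which voter~$1$ approves $\{a_1,\dots,a_k\}$ and every other voter approves the full set $A$, and define the sliding committees $W_i=\{a_{i+1},a_{i+2},\dots,a_{i+k}\}$ for $i=0,1,\dots,k$. Since every voter other than voter~$1$ assigns utility $k$ to every size-$k$ committee, they are Pareto-indifferent. For voter~$1$ one has $|W_i\cap\{a_1,\dots,a_k\}|=k-i$, so she strictly prefers $W_i$ to $W_{i+1}$, and hence $W_i$ Pareto-dominates $W_{i+1}$ in $(P,k)$. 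Applying $\beta$-PE along this chain yields
\begin{align*}
\p[f(P,k)=W_0]\ge \beta^k\cdot\p[f(P,k)=W_k].
\end{align*}

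To close the loop I would pick $\sigma$ to be the involution on $A$ swapping $a_i$ with $a_{i+k}$ for each $i=1,\dots,k$ (and fixing the remaining alternatives). Then $\sigma\cdot W_0=W_k$, and $\sigma\cdot P$ leaves every voter other than voter~$1$ unchanged (they approve the $\sigma$-invariant set $A$) while replacing voter~$1$'s ballot by $\{a_{k+1},\dots,a_{2k}\}$. Hence $P$ and $\sigma\cdot P$ are neighboring voting instances, so $\epsilon$-DP gives $\p[f(P,k)=W_0]\le\erm^{\epsilon}\cdot\p[f(\sigma\cdot P,k)=W_0]$, and neutrality gives $\p[f(\sigma\cdot P,k)=W_0]=\p[f(P,k)=\sigma^{-1}\cdot W_0]=\p[f(P,k)=W_k]$. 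Combining these two with the $\beta$-PE chain inequality yields $\beta^k\le\erm^{\epsilon}$, i.e.\ $\beta\le\erm^{\epsilon/k}$, contradicting $\beta>\erm^{\epsilon/k}$.

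The main obstacle is the simultaneous engineering of $P$ and $\sigma$: the same permutation must (i) carry the top committee $W_0$ to the bottom committee $W_k$ of a length-$k$ Pareto chain and (ii) act on $P$ as a change of only one voter's ballot. The asymmetry between a single pivotal voter (who approves exactly the set $\{a_1,\dots,a_k\}$ that $\sigma$ moves) and the ``indifferent bulk'' of voters (who approve the $\sigma$-invariant set $A$) is what makes both (i) and (ii) hold at once.
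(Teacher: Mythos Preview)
Your proof is correct and proceeds along a genuinely different route from the paper's. The paper builds a much longer Pareto chain of length $nk$: it takes a profile where voter $j$ approves $\{a_1,\dots,a_k,b_1,\dots,b_{j-1}\}$ and constructs a doubly-indexed family of committees $W_{p,q}$ forming a chain $W_{1,1}\succ W_{1,2}\succ\cdots\succ W_{k,n}\succ W_{k+1,1}$; it then invokes Lemma~\ref{lem: dp-neps} (neutrality plus $\epsilon$-DP applied across $n$ changed ballots) to bound the endpoint ratio by $\erm^{n\epsilon}$, obtaining $\beta^{nk}\le\erm^{n\epsilon}$. You instead build only a length-$k$ chain and, crucially, choose the profile so that the permutation $\sigma$ carrying $W_0$ to $W_k$ alters only \emph{one} voter's ballot (the ``indifferent bulk'' approving the $\sigma$-invariant set $A$); a single application of $\epsilon$-DP then suffices, giving $\beta^{k}\le\erm^{\epsilon}$ directly. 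Your argument is more economical---it needs only $m\ge 2k$ rather than $m\ge n+2k-1$, and avoids the double-indexed committee family---while the paper's construction has the compensating advantage that the same profile and chain are reused verbatim in the proofs of Theorems~\ref{thm: 3-PE-JR-upper} and~\ref{thm: 3-PE-CC-upper}, where one link of the $nk$-step chain is ``upgraded'' from a $\beta$-PE step to a $\theta$-JR or $\eta$-CC step; your profile, in which every voter but one approves all of $A$, would not support those three-way arguments as easily.
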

\begin{pfsketch}
  Consider the following voting instance $E=(P,k)$, where $m\geqslant n+2k-1$ and $\{a_1,\ldots,a_k,b_1,\ldots,b_{n-1}, c_1,\ldots,c_k\}\subseteq A$.
    \begin{align*}
      P_j = \begin{cases}
        \{a_1,a_2,\ldots,a_k\}, & j=1, \\
        \{a_1,a_2,\ldots,a_k,b_1\}, & j=2, \\
        ~\cdots \\
        \{a_1,a_2,\ldots,a_k,b_1,b_2,\ldots,b_{n-1}\}, & j=n. \\
      \end{cases}
    \end{align*}
    Then we construct a series of distinct committees $W_{p,q}$, where $1\leqslant p\leqslant k$ and $1\leqslant q\leqslant n$.
    \begin{align*}
      W_{p,q} = \begin{cases}
        \{a_1,\ldots,a_{k-p+1},c_1,\ldots,c_{p-1}\}, & q=1, \\
        \{a_1,\ldots,a_{k-p},b_{q-1},c_1,\ldots,c_{p-1}\}, & q>1.
      \end{cases}
    \end{align*}
    Further, we can prove that $\{W_{p,q}\}_{p\leqslant k, q\leqslant n}$ forms a chain under the Pareto-dominance relationship. Consider $W_{1,1}=\{a_1,a_2,\ldots,a_k\}$ and $W_{k+1,1}=\{c_1,c_2,\ldots,c_k\}$, we have
    \begin{align*}
      \p[f(P,k)=W_{1,1}] \geqslant \beta^{nk}\cdot \p[f(P,k)=W_{k+1,1}],
    \end{align*}
    for any neutral ABC rule $f\colon\themap$ satisfying $\beta$-PE. Further, if $f$ satisfies $\epsilon$-DP, we have
    \begin{align*}
      \p[f(P,k)=W_{1,1}] \leqslant \erm^{n\epsilon}\cdot \p[f(P,k)=W_{k+1,1}].
    \end{align*}
    In other words, we have $\beta\leqslant \erm^{\frac{\epsilon}{k}}$, which completes the proof.
\end{pfsketch}

Theorem \ref{thm: 2-PE-upper} shows an upper bound of the tradeoff between DP and Pareto efficiency. Next, we propose a mechanism to provide a lower bound of the achievable level of $\beta$-PE under $\epsilon$-DP, shown in Mechanism \ref{algo: AVExp}. Technically, Mechanism \ref{algo: AVExp} is equivalent to the exponential mechanism using the approval voting (AV) score as utility metric, where AV score is defined in the following equation.
\begin{align*}
    {\rm AV}_P(W) = \sum\limits_{c\in W} |\{j\in N:c\in P_j\}| = \sum\limits_{j\in N}|P_j\cap W|.
\end{align*}

\begin{algorithm}[ht]
    \caption{Exponential mechanism with AV score}
    \label{algo: AVExp}
    \KwIn{Profile $P$, Committee Size $k$, Noise Level $\epsilon$}
    \KwOut{Winning Committee $W_{win}$}
    \For{$a\in A$}{
      $\chi(a)\leftarrow |\{j\in N: a\in P_j\}|$\;
    }
    $W_{win}\leftarrow \varnothing$\;
    \While{$|W_{win}|<k$}{
      Compute the probability distribution $p$, such that $p(a)\propto \erm^{\chi(a)\cdot\epsilon/(2k)}$ for all $a\in A\backslash W_{win}$\;
      Sample $a\sim p$\;
      $W_{win}\leftarrow W_{win}\cup\{a\}$\;
    }
    \Return{$W_{win}$}
\end{algorithm}

It is not hard to check that the time complexity of Mechanism \ref{algo: AVExp} is $O(m(n+k))$. Further, the following proposition provides the DP and PE level of the mechanism.

\begin{proposition}[\bf\boldmath $\beta$-PE, lower bound]
    \label{prop: 2-PE-lower}
    Given any $\epsilon\in\mathbb{R}_+$, Mechanism \ref{algo: AVExp} satisfies $\epsilon$-DP and $\erm^{\frac{\epsilon}{2k}}$-PE.
\end{proposition}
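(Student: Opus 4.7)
The plan is to prove the two guarantees---$\epsilon$-DP and $\erm^{\epsilon/(2k)}$-PE---separately, both leveraging the exponential-mechanism structure of each loop iteration of Mechanism \ref{algo: AVExp}.

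For $\epsilon$-DP, I would first identify each pass of the while-loop as an exponential mechanism on the remaining candidate set $A\setminus W_{\mathrm{win}}$ with utility $\chi(a)=|\{j\in N:a\in P_j\}|$ and scale parameter $\epsilon/(2k)$. Under neighboring profiles $P,P'$, the count $\chi(a)$ shifts by at most one for every alternative $a$, so the utility has sensitivity one; the standard exponential-mechanism guarantee then yields $2\cdot\tfrac{\epsilon}{2k}=\epsilon/k$-DP per iteration. Since the candidate set at iteration $j$ is a deterministic function of the first $j-1$ outputs, adaptive sequential composition across the $k$ iterations delivers $\epsilon$-DP for the entire mechanism.

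For $\erm^{\epsilon/(2k)}$-PE, I would expand each output probability over orderings: $\p[f(P,k)=W]=\sum_{\pi\in\mathrm{Perm}(W)}\prod_{j=1}^{k}\frac{\erm^{\chi(\pi_j)\epsilon/(2k)}}{Z_j(\pi)}$, where $Z_j(\pi)=\sum_{a\notin\{\pi_1,\dots,\pi_{j-1}\}}\erm^{\chi(a)\epsilon/(2k)}$. If $W_1$ Pareto-dominates $W_2$, then every voter contributes a nonnegative difference $|W_1\cap P_j|-|W_2\cap P_j|$ with at least one strictly positive, so by integrality $\mathrm{AV}_P(W_1)\geqslant\mathrm{AV}_P(W_2)+1$. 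The numerator factor per ordering equals $\erm^{\mathrm{AV}_P(W)\epsilon/(2k)}$ and so contributes at least $\erm^{\epsilon/(2k)}$ in favour of $W_1$; it remains to control the denominator sums across orderings.

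The main obstacle is handling the ordering-dependent denominators $Z_j(\pi)$. To do so, I would pair each $\pi\in\mathrm{Perm}(W_1)$ with an ordering $\pi'\in\mathrm{Perm}(W_2)$ via a bijection that is the identity on $W_1\cap W_2$ and matches each element of $W_1\setminus W_2$ to an element of $W_2\setminus W_1$ of no greater $\chi$-value. Existence of such a matching under Pareto dominance would follow from Hall's theorem applied to the bipartite graph induced by the voter-support structure; once established, the matching yields $Z_j(\pi)\leqslant Z_j(\pi')$ for every $j$ by a stepwise comparison of excluded mass, which combined with the numerator factor closes the $\erm^{\epsilon/(2k)}$ ratio bound. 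If a direct Hall-type matching is unavailable in some profile, I would instead reduce the general case to a chain of single-element Pareto-improving swaps, each contributing a factor of at least $\erm^{\epsilon/(2k)}$ that composes to preserve the bound since $\erm^{\epsilon/(2k)}\geqslant 1$.
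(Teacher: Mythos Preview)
Your DP argument is sound: each of the $k$ rounds is an exponential mechanism on $A\setminus W_{\mathrm{win}}$ with sensitivity-$1$ utility $\chi$ and scale $\epsilon/(2k)$, so it is $(\epsilon/k)$-DP, and adaptive composition over the $k$ rounds gives $\epsilon$-DP. This differs from the paper, which first asserts that the iterative procedure is equivalent to the committee-level exponential mechanism, i.e.\ $\p[f(P,k)=W]\propto \erm^{\mathrm{AV}_P(W)\,\epsilon/(2k)}$, and then reads off both guarantees in one line from that identity. Your composition route is more careful in that it does not depend on that global proportionality claim.

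For $\erm^{\epsilon/(2k)}$-PE, however, your plan has a genuine gap. Pareto dominance of $W_1$ over $W_2$ does \emph{not} imply the existence of a bijection $\phi\colon W_1\setminus W_2\to W_2\setminus W_1$ with $\chi(s)\geqslant\chi(\phi(s))$; the Hall condition can fail. Take $k=2$, $A=\{a,b,c,d\}$, ten voters all approving $a$, of whom five also approve $c$, four also approve $d$ (disjointly), and one approves only $a$; nobody approves $b$. Then $|P_j\cap W_1|=1\geqslant|P_j\cap W_2|$ for every voter $j$ with strict inequality for the last voter, so $W_1=\{a,b\}$ Pareto-dominates $W_2=\{c,d\}$; yet $\chi(b)=0<\chi(c),\chi(d)$, so $b$ cannot be matched. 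Your fallback also breaks on this instance: from $W_2$ the only single-swap Pareto improvements are $\{a,c\}$ and $\{a,d\}$, and $W_1=\{a,b\}$ Pareto-dominates neither (the $\{a,c\}$- or $\{a,d\}$-voters strictly lose), so no chain of single-element Pareto-improving swaps from $W_2$ to $W_1$ exists. Hence neither device you propose controls the ordering-dependent denominators $Z_j(\pi)$; to finish the PE part for the iterative mechanism you need a different comparison argument, or you must establish and invoke the paper's proportionality identity directly.
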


\paragraph{DP against CC.} Secondly, we study the tradeoff between DP and CC. Similar to $\beta$-PE, we propose an approximate version of CC by introducing a parameter $\eta\in\R_+$ to quantify the level of CC.

\begin{definition}[\bf\boldmath $\eta$-Condorcet criterion, $\eta$-CC for short]
    \label{def: approx-CC}
    An ABC rule satisfies $\eta$-Condorcet criterion if for any voting instance $E=(P,k)$ where a Condorcet committee $W_c$ exists,
    \begin{align*}
        \p[f(E)=W_c] \geqslant \eta\cdot \p[f(E)=W],\quad\text{for all}~W\in\Ak\backslash \{W_c\}.
    \end{align*}
\end{definition}

It is worth noting that in Definition \ref{def: approx-CC}, a larger $\eta$ is more desirable. When $\eta$ goes to infinity, $\eta$-CC approaches the standard form of Condorcet criterion.

With the definition above, we can quantify the tradeoff between DP and Condorcet criterion by investigating the relationship between $\eta$ and the privacy budget $\epsilon$. In fact, the following theorem illustrates an upper bound of $\eta$ under $\epsilon$-DP.

\begin{theorem}[\bf\boldmath $\eta$-CC, upper bound]
    \label{thm: 2-CC-upper}
    Given any $\epsilon\in\R_+$, there is no ABC rule satisfying $\epsilon$-DP and $\eta$-CC with $\eta>\erm^{\epsilon}$. 
\end{theorem}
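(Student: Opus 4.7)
The plan is to reuse the two-step ``swap-target-between-neighbors'' template from the proofs of Theorems \ref{thm: 2-JR-upper}--\ref{thm: 2-EJR-upper}. I will construct a pair of neighboring voting instances $(P,k)$ and $(P',k)$ whose Condorcet committees are distinct committees $W$ and $W'$ respectively, and then chain $\eta$-CC on each profile with $\epsilon$-DP across the two profiles. Specifically, $\eta$-CC applied to $(P,k)$ gives $\p[f(P,k)=W]\geq \eta\cdot \p[f(P,k)=W']$, $\epsilon$-DP gives $\p[f(P,k)=W']\geq \erm^{-\epsilon}\p[f(P',k)=W']$, $\eta$-CC on $(P',k)$ gives $\p[f(P',k)=W']\geq \eta\cdot \p[f(P',k)=W]$, and a second application of $\epsilon$-DP gives $\p[f(P',k)=W]\geq \erm^{-\epsilon}\p[f(P,k)=W]$. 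Composing the four inequalities yields $\eta^{2}\erm^{-2\epsilon}\leq 1$, i.e., $\eta\leq \erm^{\epsilon}$.

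For the construction I would take $n$ odd and two disjoint $k$-subsets $W=\{a_1,\dots,a_k\}$ and $W'=\{a_{k+1},\dots,a_{2k}\}$ (using $m\geq 2k$). In $P$ the first $\lceil n/2\rceil$ voters approve exactly $W$ and the remaining $\lfloor n/2\rfloor$ voters approve exactly $W'$; $P'$ is obtained by switching a single voter from the $W$-block to the $W'$-block, so $P$ and $P'$ are neighboring. To verify the Condorcet property in $P$, note that any $\widetilde W\in\Ak$ with $\widetilde W\neq W$ satisfies $|\widetilde W\cap W|\leq k-1<k=|W\cap W|$, so every voter in the $W$-block strictly prefers $W$ to $\widetilde W$; that block is a strict majority (since $n$ is odd), hence $W$ is the unique Condorcet committee of $P$. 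Symmetrically, after the flip $W'$ becomes the unique Condorcet committee of $P'$.

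The main obstacle is really just exhibiting neighboring instances whose Condorcet committees differ; once that is in hand, the algebraic chaining is routine and entirely parallel to the $\theta$-JR argument. Parity of $n$ is essential to keep the majority strict both before and after the single-voter swap, and the disjointness of $W$ and $W'$ ensures the Condorcet committee is genuinely swapped rather than merely perturbed. In the boundary regime $k>m/2$ where disjoint $W,W'$ are unavailable, a variant using $|W\cap W'|=k-1$ and a single swing voter would still produce the required neighbor pair, but this refinement is not needed for the stated bound, so I would present the clean $m\geq 2k$ construction.
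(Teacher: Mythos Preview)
Your proposal is correct and follows essentially the same approach as the paper's proof: construct neighboring instances whose Condorcet committees differ and then chain two applications of $\eta$-CC with two applications of $\epsilon$-DP to get $\eta^2\leq \erm^{2\epsilon}$. The only cosmetic difference is the construction itself---the paper takes the two Condorcet committees to differ in a single alternative ($W\cup\{a_1\}$ versus $W\cup\{a_2\}$ for some $(k-1)$-subset $W$), which works already for $m\geq k+1$, whereas your disjoint-committee construction needs $m\geq 2k$ (a point you correctly flag and sketch how to repair).
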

\begin{pfsketch}
    Consdier the voting instances $(P,k)$ and $(P',k)$, where $n=2t+1$ and there exists a $W\in\powerset{A,k-1}$ that
    \begin{align*}
      P_j &= \begin{cases}
        W\cup \{a_1\}, & 1\leqslant j\leqslant t+1, \\
        W\cup \{a_2\}, & t+2\leqslant j\leqslant 2t+1.
      \end{cases} \\
      P_j' &= \begin{cases}
        W\cup \{a_1\}, & 1\leqslant j\leqslant t, \\
        W\cup \{a_2\}, & t+1\leqslant j\leqslant 2t+1.
      \end{cases}
    \end{align*}
    Then $W\cup \{a_1\}$ and $W\cup \{a_2\}$ are Condorcet committees for $P$ and $P'$, respectively. Therefore, for any ABC rule $f\colon\themap$ satisfying $\epsilon$-DP and $\eta$-CC, we have
    \begin{align*}
      \p[f(P,k)=W\cup \{a_1\}] \geqslant&\; \erm^{-\epsilon}\cdot \eta\cdot \p[f(P',k)=W\cup \{a_2\}] \\
      \geqslant&\; \erm^{-2\epsilon}\cdot \eta^2\cdot \p[f(P,k)=W\cup \{a_1\}],
    \end{align*}
    i.e., $\eta\leqslant \erm^{\epsilon}$, which completes the proof.
\end{pfsketch}

Theorem \ref{thm: 2-CC-upper} provides an upper bound of the achievable level of CC under $\epsilon$-DP. Further, we show that the bound can be achieved by Mechanism \ref{algo: Condorcet-RR} (randomized response for Condorcet committee). 

\begin{algorithm}[ht]
    \caption{Randomized response for Condorcet committee}
    \label{algo: Condorcet-RR}
    \KwIn{Profile $P$, Committee Size $k$, Noise Level $\epsilon$}
    \KwOut{Winning Committee $W_{win}$}
    \If{Condorcet committee $W_c$ exists}{
        Compute the distribution $p$ on $\Ak$, such that $p(W)=\begin{cases}
        \frac{\erm^{\epsilon}}{\erm^{\epsilon}+\binom{m}{k}-1}, & W = W_c \\
        \frac{1}{\erm^{\epsilon}+\binom{m}{k}-1}, & \text{\rm otherwise}
    \end{cases}$\;
    }\Else{
        Let $p$ be the uniform distribution on $\Ak$, i.e., $p(W)=\binom{m}{k}^{-1}$, for all $W\in \Ak$\;
    }
    Sample $W_{win}\sim p$\;
    \Return{$W_{win}$}
\end{algorithm}

Technically, we design this mechanism by applying randomized response mechanism to the Condorcet committee. Formally, the following Proposition shows that the mechanism satisfies both DP and approximate CC, which provides a lower bound for achievable level of $\eta$-CC under $\epsilon$-DP.

\begin{proposition}[\bf\boldmath $\eta$-CC, lower bound]
    \label{prop: 2-CC-lower}
    Given any $\epsilon\in\mathbb{R}_+$, Mechanism \ref{algo: Condorcet-RR} satisfies $\epsilon$-DP and $\erm^{\epsilon}$-CC.
\end{proposition}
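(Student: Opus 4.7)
The plan is to verify the two claimed properties of Mechanism \ref{algo: Condorcet-RR} separately, both by direct computation. The $\eta$-CC part is the easier one: whenever a Condorcet committee $W_c$ exists for $(P,k)$, the mechanism assigns $\Pr[f(P,k)=W_c] = \frac{\erm^{\epsilon}}{\erm^{\epsilon}+\binom{m}{k}-1}$ and $\Pr[f(P,k)=W] = \frac{1}{\erm^{\epsilon}+\binom{m}{k}-1}$ for every other $W \in \Ak$, so their ratio is exactly $\erm^{\epsilon}$, yielding $\erm^{\epsilon}$-CC immediately from Definition~\ref{def: approx-CC}.

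For $\epsilon$-DP, I would take an arbitrary pair of neighboring voting instances $(P,k)$ and $(P',k)$ and do a case analysis on whether Condorcet committees exist. First note that a Condorcet committee, if it exists, is unique, since by definition it strictly majority-beats every other committee. Let $M = \binom{m}{k}$. The four cases are: (i) neither $(P,k)$ nor $(P',k)$ admits a Condorcet committee, in which case both output distributions are uniform on $\Ak$ and all ratios equal $1$; (ii) both admit the same Condorcet committee $W_c = W_c'$, in which case the two distributions coincide; (iii) both admit Condorcet committees but $W_c \neq W_c'$; and (iv) exactly one of them admits a Condorcet committee.

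In case (iii), the only committees where the two distributions differ are $W_c$ and $W_c'$, and each of these receives probability $\frac{\erm^{\epsilon}}{\erm^{\epsilon}+M-1}$ under one instance and $\frac{1}{\erm^{\epsilon}+M-1}$ under the other, so the ratio is exactly $\erm^{\epsilon}$ in either direction. In case (iv), say $(P,k)$ admits $W_c$ while $(P',k)$ does not; I would compare the two probabilities for each committee. For $W_c$ itself the ratio is $\frac{\erm^{\epsilon} M}{\erm^{\epsilon}+M-1}$ against $M^{-1}$-normalized uniform, and for every other $W$ the ratio is $\frac{M}{\erm^{\epsilon}+M-1}$; using $M \geq 1$ and $\erm^{\epsilon} \geq 1$ one checks both ratios and their reciprocals lie in $[\erm^{-\epsilon}, \erm^{\epsilon}]$, which reduces to the elementary inequalities $M \leq \erm^{\epsilon}+M-1$ and $\erm^{\epsilon}+M-1 \leq \erm^{2\epsilon} M$.

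The main obstacle is conceptual rather than technical: one must confirm that case (iii) can indeed arise between neighboring profiles (a single voter can flip which committee holds the strict majority in every pairwise comparison) and that the pessimistic bound of $\erm^{\epsilon}$ realized there is still compatible with the required $\epsilon$-DP inequality. Once this is observed, all four cases yield the bound $\erm^{\epsilon}$, completing the proof.
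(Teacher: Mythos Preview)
Your proof is correct and follows essentially the same approach as the paper: verify $\erm^{\epsilon}$-CC by computing the exact ratio $\erm^{\epsilon}$ when a Condorcet committee exists, and verify $\epsilon$-DP by direct comparison of output probabilities.

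The only notable difference is in the DP argument. Where you perform a four-case analysis on whether $(P,k)$ and $(P',k)$ admit Condorcet committees, the paper dispatches DP in one line by observing that for \emph{every} profile and \emph{every} committee $W$ the output probability lies in the interval $\bigl[\tfrac{1}{\erm^{\epsilon}+\binom{m}{k}-1},\ \tfrac{\erm^{\epsilon}}{\erm^{\epsilon}+\binom{m}{k}-1}\bigr]$ (the uniform value $\binom{m}{k}^{-1}$ also falls inside this interval), so the ratio between any two such probabilities is at most $\erm^{\epsilon}$. This global-range trick subsumes all four of your cases simultaneously; your case-by-case treatment is longer but perfectly valid and arguably makes the dependence on the mechanism's two branches more transparent. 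Your side remark about whether case~(iii) ``can indeed arise'' is unnecessary for the DP bound itself (the inequality must hold over all neighboring pairs that do occur, and your computation covers them regardless), though it is true that such pairs exist, as the construction in Theorem~\ref{thm: 2-CC-upper} shows.
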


In fact, Proposition \ref{prop: 2-CC-lower} indicates that the upper bound illustrated in Theorem \ref{thm: 2-CC-upper} is tight.

\section{Three-way Tradeoffs Among DP and Axioms}
\label{sec: 3-tradeoff}

This section investigates the 3-way tradeoffs among DP and voting axioms for ABC rules. Precisely, we examine the distinction between the axiom tradeoffs in classical social choice theory and the axiom tradeoffs under DP. We take all the axioms mentioned in the previous sections into consideration. It is worth noting that each pairwise combination of axioms (the standard versions) is compatible, except for pairs consisting of CC and another axiom in JR-family. In other words, the incompatibility between CC and JR-family is not introduce by DP. In fact, letting an ABC rule $f$ satisfies $\epsilon$-DP, $\eta$-CC, and $\theta$-JR, for a voting instance $(P,k)$ that the Condorcet committee $W_c$ does not satisfy JR, we have
\begin{align*}
  \p[f(P,k)=W_c] &\geqslant \eta\cdot \p[f(P,k)=W_0] \\
  &\geqslant \eta\cdot\theta\cdot \p[f(P,k)=W_c],
\end{align*}
where $W_0\in\JR(P,k)$. In other words, we have an upper bound which is independent with $\epsilon$, i.e., $\eta\cdot\theta\leqslant 1$. Further, this upper bound is tight, since it can be achieved by Mechanism \ref{algo: Condorcet-RR}. Similar statement are also true for PJR and EJR. Therefore, we only study the 3-way tradeoffs among DP and other pairs of axioms in this section. All of the missing proofs in this section and more detailed discussions on the compatibility between axioms without DP can be found in Appendix \ref{sec: appendix-B}.

Firstly, we investigate the tradeoffs among axioms in JR-family. In fact, without DP, it is quite evident that JR, PJR, and EJR are compatible with each other. But when DP is required, the purposes of $\theta$-JR, $\rho$-PJR, and $\kappa$-EJR start to diverge. In fact, consider the following voting instance $(P,k)$ presented in Example 4.6 of~\cite{lackner2023multi}, where the set of voters can be divided into $k$ disjoint groups $V_1,V_2,\ldots,V_k$ satisfying $|V_t|=n/k$ for all $t$ and
\begin{align*}
  P_j = \begin{cases}
    \{a_1,a_{k+1},a_{k+2},\ldots,a_{2k}\}, & j\in V_1, \\
    \{a_2,a_{k+1},a_{k+2},\ldots,a_{2k}\}, & j\in V_2, \\
    ~\cdots \\
    \{a_k,a_{k+1},a_{k+2},\ldots,a_{2k}\}, & j\in V_k.
  \end{cases}
\end{align*}

\begin{figure}[h]
  \centering
  \begin{tikzpicture}[yscale=1.0,xscale=1.7]
    \filldraw[fill=cyan!10!white, draw=black] (0,0.5) rectangle (4.0,1.0);
    \node at (2.0, 0.75) {$a_{k+1}$};
    \filldraw[fill=cyan!10!white, draw=black] (0,1.0) rectangle (4.0,1.5);
    \node at (2.0, 1.25) {$a_{k+2}$};
    \filldraw[fill=cyan!10!white, draw=black] (0,1.5) rectangle (4.0,2.0);
    \node at (2.0, 1.75) {$\cdots$};
    \filldraw[fill=cyan!10!white, draw=black] (0,2.0) rectangle (4.0,2.5);
    \node at (2.0, 2.25) {$a_{2k}$};

    \filldraw[fill=orange!10!white, draw=black] (0,0) rectangle (0.8,0.5);
    \node at (0.4, 0.25) {$a_1$};
    \node at (0.4, -0.35) {$V_1$};
    \filldraw[fill=orange!10!white, draw=black] (0.8, 0) rectangle (1.6,0.5);
    \node at (1.2, 0.25) {$a_2$};
    \node at (1.2, -0.35) {$V_2$};
    \filldraw[fill=orange!10!white, draw=black] (1.6, 0) rectangle (2.4,0.5);
    \node at (2.0, 0.25) {$a_3$};
    \node at (2.0, -0.35) {$V_3$};
    \filldraw[fill=orange!10!white, draw=black] (2.4, 0) rectangle (3.2,0.5);
    \node at (2.8, 0.25) {$\cdots$};
    \node at (2.8, -0.35) {$\cdots$};
    \filldraw[fill=orange!10!white, draw=black] (3.2, 0) rectangle (4.0,0.5);
    \node at (3.6, 0.25) {$a_{k}$};
    \node at (3.6, -0.35) {$V_{k}$};
  \end{tikzpicture}
  \caption{\boldmath Diagram corresponding to the profile $P$.}
  \label{fig: JRExample}
\end{figure}

The profile $P$ can be visualized as Figure \ref{fig: JRExample}, where a block labeled with alternative $a$ is placed above a voter group $V$ indicates that $V$ approves $a$ in $P$. It is easy to check that $W_1=\{a_{k+1},a_{k+2},\ldots,a_{2k}\}$ is the unique committee satisfying EJR, and $W_2=\{a_1,a_2,\ldots,a_k\}$ satisfies PJR. Intuitively, for such a voting instance, any ABC rule $f$ that achieves the optimal $\rho$-PJR under $\epsilon$-DP should be devoted to maximize the minimum winning probability of $W_1, W_2$, and miminize the maximum winning probability of the other committees. However, $\kappa$-EJR only focuses on maximizing the winning probabilities of $W_1$. In other words, $\kappa$-EJR requires the ABC rule to distinguish $W_1$ from $W_2$, while $\rho$-PJR requires the ABC rule to treat them as the same, which indicates that DP introduces additional tradeoffs between PJR and EJR. Formally, the following two theorems quantify this kind of tradeoffs among JR-family.

\begin{theorem}[\bf\boldmath JR against PJR/EJR]
  \label{thm: 3-JR-PJR-upper}
  Given any $\epsilon\in\R_+$, there is no ABC rule satisfying
  \begin{enumerate}[{\rm (1)}]
    \item $\epsilon$-DP, $\theta$-JR, and $\rho$-PJR with $\theta\cdot \rho > \erm^{\epsilon}$;
    \item $\epsilon$-DP, $\theta$-JR, and $\kappa$-EJR with $\theta\cdot \kappa > \erm^{\epsilon}$.
  \end{enumerate}
\end{theorem}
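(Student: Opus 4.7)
The plan is to extend the cycle-closing argument of Theorem~\ref{thm: 2-JR-upper} by inserting an intermediate committee, so that the PJR (or EJR) step and the JR step together accumulate a compound factor $\rho\theta$ (respectively $\kappa\theta$) on each profile. I construct one pair of neighboring voting instances $(P,k)$ and $(P',k)$ together with four committees $W_1,W_2,W_4,W_5$ arranged symmetrically across the two profiles, then compose two axiom chains with two $\epsilon$-DP steps into a cycle returning $\p[f(P,k)=W_1]$ to itself multiplied by $(\rho\theta)^2\erm^{-2\epsilon}$.

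Fix $k\ge 5$ and write $n=kt$ with $t=n/k\ge 2$ an integer. In $P$ let voters $1,\ldots,t$ approve $\{a_1,b_1\}$, voters $t+1,\ldots,2t-1$ approve $\{a_2,b_2\}$, voters $2t,\ldots,4t-1$ approve $\{a_3,b_3\}$, and voters $4t,\ldots,n$ approve $\{d_1,d_2\}$; let $P'$ agree with $P$ except that voter $t$ in $P'$ approves $\{a_2,b_2\}$. A direct count of cohesive groups shows that in $P$ the only cohesive blocks are $\{1,\ldots,t\}$ ($1$-cohesive on $\{a_1,b_1\}$), $\{2t,\ldots,4t-1\}$ ($2$-cohesive on $\{a_3,b_3\}$), and $\{4t,\ldots,n\}$ ($1$-cohesive on $\{d_1,d_2\}$); in $P'$ the first of these is replaced by $\{t,\ldots,2t-1\}$, $1$-cohesive on $\{a_2,b_2\}$, while the other two blocks are unchanged. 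Because every such block $V$ satisfies $|\bigcap_{i\in V}P_i|=\ell$ for its cohesion level $\ell$, PJR and EJR impose the same constraints on a winning committee $W$: on $P$, $\{a_3,b_3\}\subseteq W$ together with $W\cap\{a_1,b_1\}\neq\emptyset$ and $W\cap\{d_1,d_2\}\neq\emptyset$; on $P'$, the same with $\{a_2,b_2\}$ in place of $\{a_1,b_1\}$.

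Define $W_1=\{a_1,b_1,a_3,b_3,d_1,\ldots\}$, $W_2=\{a_2,b_2,a_3,b_3,d_1,\ldots\}$, $W_4=\{a_1,a_3,d_1,\ldots\}$, and $W_5=\{a_2,a_3,d_1,\ldots\}$, each padded to size $k$ with alternatives approved by no voter. Then $W_1$ is PJR/EJR on $P$ but misses $\{a_2,b_2\}$ and so fails JR on $P'$; $W_2$ is PJR/EJR on $P'$ but fails JR on $P$; and $W_4$ (respectively $W_5$) is JR on $P$ (respectively $P'$) yet fails PJR and EJR because $|W_4\cap\{a_3,b_3\}|=1<2$. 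Chaining $\rho$-PJR applied to $(W_1,W_4)$ with $\theta$-JR applied to $(W_4,W_2)$ on $P$ gives $\p[f(P,k)=W_1]\ge\rho\theta\cdot\p[f(P,k)=W_2]$; the mirror chain on $P'$ gives $\p[f(P',k)=W_2]\ge\rho\theta\cdot\p[f(P',k)=W_1]$. Two applications of $\epsilon$-DP, one on $W_1$ and one on $W_2$, then close the loop to $\p[f(P,k)=W_1]\ge(\rho\theta)^2\erm^{-2\epsilon}\cdot\p[f(P,k)=W_1]$, forcing $\rho\theta\le\erm^\epsilon$. Running the same argument with $\kappa$-EJR in place of $\rho$-PJR---using that $W_4,W_5$ already fail EJR on the $\{a_3,b_3\}$-block---settles part~(2).

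The main obstacle will be the simultaneous calibration of group sizes: voter $t$ must lie exactly at the $1$-cohesion threshold of $\{1,\ldots,t\}$ (so that the one-voter flip both destroys the cohesion of $\{1,\ldots,t-1\}$ in $P'$ and newly creates the cohesion of $\{t,\ldots,2t-1\}$), and at the same time the $2$-cohesive block $\{2t,\ldots,4t-1\}$ enforcing PJR/EJR must be disjoint from voter $t$ so that the PJR/EJR constraint is literally identical on $P$ and $P'$. This is what allows $W_4$ and $W_5$ to fail PJR/EJR on both profiles for the same structural reason, and pins down $t=n/k$ exactly and $k\ge 5$ so that the filler block is at least $1$-cohesive and the size-$k$ committees admit the required padding.
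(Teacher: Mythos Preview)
Your argument is correct and follows the same cycle-closing template as the paper: pass from a PJR/EJR committee to a ``JR but not PJR'' intermediate, then to a ``not JR'' committee, and close with two $\epsilon$-DP steps across neighboring profiles. The difference lies in how you build the instance. The paper has the two swappable voter blocks share the alternative $a_1$; you instead make them approve \emph{disjoint} sets $\{a_1,b_1\}$ and $\{a_2,b_2\}$, and place the PJR/EJR-breaking $2$-cohesive block on $\{a_3,b_3\}$ elsewhere, untouched by the swap. This separation is exactly what makes your $\theta$-JR step go through: your $W_2$ genuinely fails JR on $P$ because it contains neither $a_1$ nor $b_1$. In the paper's construction the analogous committee $W_1'=\{a_1,a_3,a_4,a_5,\ldots,a_{k+1}\}$ contains the shared $a_1$, so every voter in the block $V=\{1,\ldots,s\}$ of $P$ already has $a_1\in P_i\cap W_1'$ and $W_1'$ in fact \emph{satisfies} JR on $P$; the paper's $\theta$-JR inequality therefore does not apply as written. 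Your disjoint-blocks design sidesteps this, and using two separate intermediates $W_4,W_5$ (one per profile) rather than a single shared $W_0$ is harmless.

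One small imprecision: you write ``fix $k\ge 5$'', but for $k\ge 6$ the filler block $\{4t,\ldots,n\}$ has size $(k-4)t+1\ge 2t+1$ and becomes $2$-cohesive on $\{d_1,d_2\}$, so $W_1,W_2$ (containing only $d_1$) would themselves fail PJR on the filler. Since a single instance suffices, just take $k=5$; alternatively, include $d_2$ in all four committees before padding.
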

\begin{pfsketch}
    For both (1) and (2), consider the following neighboring voting instances $(P,k)$ and $(P',k)$, where $s=\lceil 2n/k\rceil$:
    \begin{align*}
    P_j &= \begin{cases}
      \{a_1,a_2\}, & 1\leqslant j\leqslant s, \\
      \{a_1,a_3\}, & s+1\leqslant j\leqslant 2s-1, \\
      \{a_4,a_5\}, & 2s\leqslant j\leqslant 3s-1, \\
      \{a_4\}, & \text{otherwise}.
    \end{cases} \\
    P_j' &= \begin{cases}
      \{a_1,a_2\}, & 1\leqslant j\leqslant s-1, \\
      \{a_1,a_3\}, & s\leqslant j\leqslant 2s-1, \\
      \{a_4,a_5\}, & 2s\leqslant j\leqslant 3s-1, \\
      \{a_4\}, & \text{otherwise}.
    \end{cases}
    \end{align*}
    Then we can construct the following committees.
    \begin{align*}
        W_0 &=\{a_1,a_4,a_6,a_7,\ldots,a_{k+3}\}, \\
        W_1 &=\{a_1,a_2,a_4,a_5,a_6,\ldots,a_{k+1}\}, \\
        W_1' &=\{a_1,a_3,a_4,a_5,a_6,\ldots,a_{k+1}\}.
    \end{align*}
    With the constructions above, we can finally prove that for any $f\colon\themap$ satisfying $\epsilon$-DP, $\theta$-JR, and $\rho$-PJR,
    \begin{align*}
        \p[f(P,k)=W_1] &\geqslant \rho\cdot \p[f(P,k)=W_0] \\
        &\geqslant \rho\cdot\theta\cdot\erm^{-\epsilon}\cdot \p[f(P',k)=W_1'] \\
        &\geqslant \rho^2\cdot\theta^2\cdot\erm^{-2\epsilon}\cdot \p[f(P,k)=W_1],
    \end{align*}
    i.e., $\rho\cdot\theta\leqslant\erm^\epsilon$, which completes the proof.
\end{pfsketch}

\begin{theorem}[\bf\boldmath PJR against EJR]
  \label{thm: 3-PJR-EJR-upper}
  Given any $\epsilon\in\R_+$, there is no ABC rule satisfying $\epsilon$-DP, $\rho$-PJR, and $\kappa$-EJR with $\rho\cdot\kappa > \erm^{\epsilon\lceil \frac{n}{k}\rceil}$.
\end{theorem}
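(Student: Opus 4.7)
The plan is to use a doubled-chain construction analogous to Theorem~\ref{thm: 3-JR-PJR-upper}, but with profiles shaped after the Example~4.6 setting cited in the text, so that the two chained instances differ on exactly $s := \lceil n/k\rceil$ voters (matching the factor in the exponent of the target bound). I would partition the $n$ voters into groups $V_1,\ldots,V_k$ of size (roughly) $s$. In $P$, let each $V_t$ approve $\{a_t\}\cup\{a_{k+1},\ldots,a_{2k}\}$, so that $V := V_1\cup\cdots\cup V_k$ is $k$-cohesive with common approvals $\{a_{k+1},\ldots,a_{2k}\}$. In $P'$, keep $V_2,\ldots,V_k$ unchanged but overwrite every $V_1$ voter's ballot with $\{a_1,a_{2k+1}\}$; this destroys the $k$-cohesiveness of $V$, leaving $V' := V_2\cup\cdots\cup V_k$ as the maximal cohesive group, now at level $k-1$, while $V_1$ drops to only $1$-cohesive via $\{a_1,a_{2k+1}\}$. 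Crucially, $P$ and $P'$ differ on exactly the $s$ voters of $V_1$, so $\epsilon$-DP chained $s$ times gives an $\erm^{s\epsilon}$ factor between them.

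The change in cohesive structure flips the PJR/EJR status of four distinct committees,
\begin{align*}
    A &= \{a_1,\ldots,a_k\}, & D &= \{a_{k+1},\ldots,a_{2k}\}, \\
    B &= \{a_{k+1},\ldots,a_{2k-1},a_{2k+1}\}, & C &= \{a_2,\ldots,a_k,a_{2k+1}\},
\end{align*}
which I verify directly from the PJR/EJR definitions. In $P$: $A$ is PJR (it hits $k$ of $\{a_1,\ldots,a_{2k}\}$, covering $V$) but not EJR (no voter has $k$ approvals in $A$), $D$ is EJR, and $B$ is not PJR (it contributes only $k-1$ members to $\{a_1,\ldots,a_{2k}\}$). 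In $P'$: $B$ is EJR (every $V_t$-voter for $t\ge 2$ has $k-1$ approvals in $B$, matching the $(k-1)$-cohesiveness of $V'$, and $V_1$'s $1$-cohesive need is met via $a_{2k+1}$), $C$ is PJR but not EJR (each $V_t$-voter overlaps $C$ in a single alternative, far below the $k-1$ threshold), and $D$ fails PJR because $V_1$'s new union $\{a_1,a_{2k+1}\}$ is disjoint from $D$.

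With these committees I would chain two symmetric strings of inequalities. The forward chain combines $\rho$-PJR in $P$, $s$ applications of $\epsilon$-DP, and $\kappa$-EJR in $P'$:
\begin{align*}
    \p[f(P,k)=A] &\ge \rho\cdot \p[f(P,k)=B] \\
    &\ge \rho\cdot \erm^{-s\epsilon}\cdot \p[f(P',k)=B] \\
    &\ge \rho\cdot\kappa\cdot \erm^{-s\epsilon}\cdot \p[f(P',k)=C].
\end{align*}
The symmetric backward chain goes $C\to D\to A$ using $\rho$-PJR in $P'$ (against the non-PJR committee $D$), DP, and $\kappa$-EJR in $P$ (against the non-EJR committee $A$), yielding $\p[f(P',k)=C]\ge \rho\kappa\,\erm^{-s\epsilon}\,\p[f(P,k)=A]$. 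Composing the two gives $(\rho\kappa)^2\le \erm^{2s\epsilon}$, hence $\rho\kappa\le \erm^{\epsilon\lceil n/k\rceil}$, contradicting the hypothesis. The main obstacle, I expect, is the bookkeeping for $C$: it must be PJR but not EJR in $P'$ and distinct from $A,B,D$. The choice above works precisely because swapping one of $\{a_{k+1},\ldots,a_{2k}\}$ for $a_{2k+1}$ simultaneously satisfies $V_1$'s $1$-cohesive requirement and caps each $V'$-voter's overlap with $C$ at $1$, comfortably below the $k-1$ threshold for EJR; checking that no other $\ell$-cohesive subgroup accidentally revives EJR (or kills PJR) for these four committees is the verification that the formal proof must carry out carefully.
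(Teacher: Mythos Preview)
Your argument is correct and follows essentially the same strategy as the paper's: build two profiles differing on $s=\lceil n/k\rceil$ voters and chain $\rho$-PJR, $\kappa$-EJR, and $\epsilon$-DP through committees whose JR-family status flips between the two instances. The paper's construction is the ``dual'' of yours (voters share $\{a_1,\ldots,a_k\}$ rather than $\{a_{k+1},\ldots,a_{2k}\}$) and is slightly more economical---it uses only three committees by reusing a single PJR-but-not-EJR pivot $W_0=\{a_{k+1},\ldots,a_{2k}\}$ in both instances and applies $\kappa$ then $\rho$ before each DP crossing---whereas you use four committees and interleave the order; the logic and the resulting bound are identical.
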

\begin{pfsketch}
    Consider the following voting instances $(P,k)$ and $(P'k)$, where $s=\lceil n/k\rceil$:
    \begin{align*}
        P_j &= \begin{cases}
            \{a_1,a_2,\ldots,a_k,a_{k+1}\}, & 1\leqslant j\leqslant s, \\
            \{a_1,a_2,\ldots,a_k,a_{k+2}\}, & s+1\leqslant j\leqslant 2s, \\
            ~\cdots \\
            \{a_1,a_2,\ldots,a_k,a_{2k}\}, & (k-1)s+1\leqslant j\leqslant n.
        \end{cases} \\
        P_j' &= \begin{cases}
            \{a_{k+1},a_{2k+1},a_{2k+2},\ldots,a_{3k-1}\}, & 1\leqslant j\leqslant s, \\
            P_j, & \text{otherwise}.
        \end{cases}
    \end{align*}
    Then we can construct the following committees.
    \begin{align*}
        W_0 &=\{a_{k+1},a_{k+2},\ldots,a_{2k}\}, \\
        W_1 &=\{a_1,a_2,\ldots,a_k\}, \\
        W_1' &=\{a_1,a_2,\ldots,a_{k-1},a_{2k+1}\}.
    \end{align*}
    Finally, for any $f\colon\themap$ satisfying $\epsilon$-DP, $\rho$-PJR, and $\kappa$-EJR, we can prove that
    \begin{align*}
        \p[f(P,k)=W_1] &\geqslant \kappa\cdot \p[f(P,k)=W_0] \\
        &\geqslant \kappa\cdot\rho\cdot\erm^{-s\epsilon}\cdot \p[f(P',k)=W_1'] \\
        &\geqslant \kappa^2\cdot\rho^2\cdot\erm^{-2s\epsilon}\cdot \p[f(P,k)=W_1],
    \end{align*}
    i.e., $\kappa\cdot \rho\leqslant \erm^{\lceil n/k \rceil\cdot\epsilon}$, which completes the proof.
\end{pfsketch}

Secondly, we investigate the tradeoffs between PE and axioms in JR-family. Under the deterministic settings (without DP), the proportional approval voting (PAV) rule satisfies both PE and EJR \citep{aziz2017justified}, which indicates that the standard PE are compatible with all axioms in JR-family. However, when DP is required, the tradeoffs between $\beta$-PE and approximate axioms in JR-family begin to appear. Intuitively, when a Pareto-dominated committee satisfies JR (or PJR/EJR), $\beta$-PE requires the ABC rule to distinguish it from its Pareto-dominator, while $\theta$-JR (or $\rho$-PJR/$\kappa$-EJR) treats them as the same. Formally, the following theorem provides upper bounds for the tradeoffs between PE and JR-family under $\epsilon$-DP.

\begin{theorem}[\bf PE against JR-family]
  \label{thm: 3-PE-JR-upper}
  Given any $\epsilon\in\R_+$, there is no neutral ABC rule satisfying
  \begin{enumerate}[{\rm (1)}]
    \item $\epsilon$-DP, $\beta$-PE, and $\theta$-JR with $\beta^{nk-1}\cdot\theta> \erm^{n\epsilon}$;
    \item $\epsilon$-DP, $\beta$-PE, and $\rho$-PJR with $\beta^{nk-1}\cdot\rho> \erm^{n\epsilon}$;
    \item $\epsilon$-DP, $\beta$-PE, and $\kappa$-EJR with $\beta^{nk-1}\cdot\kappa> \erm^{n\epsilon}$.
  \end{enumerate}
\end{theorem}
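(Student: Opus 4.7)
The plan is to reuse the profile $P_j=\{a_1,\ldots,a_k,b_1,\ldots,b_{j-1}\}$ and the chain of $nk$ Pareto-dominations $W_{1,1}\succ W_{1,2}\succ\cdots\succ W_{k+1,1}$ constructed in the proof of Theorem~\ref{thm: 2-PE-upper}, and to notice that at least one link of that chain simultaneously witnesses a JR-family axiom distinction. Replacing the Pareto factor $\beta$ at exactly that one link by the stronger axiom factor ($\theta$, $\rho$, or $\kappa$) upgrades the product bound from $\beta^{nk}$ to $\beta^{nk-1}\cdot\theta$ (resp.\ $\beta^{nk-1}\cdot\rho$, $\beta^{nk-1}\cdot\kappa$); the same neutrality-plus-DP upper bound $\p[f(P,k)=W_{1,1}]\le\erm^{n\epsilon}\p[f(P,k)=W_{k+1,1}]$ (via the permutation $\sigma$ swapping $a_i\leftrightarrow c_i$, which makes $P$ and $\sigma\cdot P$ differ on all $n$ voters) then closes the argument.

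For cases (2) and (3) the natural choice is the link $W_{1,n}\to W_{2,1}$, where $W_{1,n}=\{a_1,\ldots,a_{k-1},b_{n-1}\}$ and $W_{2,1}=\{a_1,\ldots,a_{k-1},c_1\}$. The full electorate is $k$-cohesive, and against it $W_{1,n}$ satisfies PJR ($|W_{1,n}\cap\bigcup_j P_j|=k$) and EJR (voter $n$ has $|P_n\cap W_{1,n}|=k$), whereas $W_{2,1}$ violates both, since only $k-1$ of its members lie in $\bigcup_j P_j$ and $|P_j\cap W_{2,1}|=k-1$ for every $j$. For case (1) the same link does not distinguish JR, because both committees contain $a_1,\ldots,a_{k-1}$ which every voter approves; I would therefore use $W_{k,q_0}\to W_{k,q_0+1}$ with $q_0=\lceil n/k\rceil$: for $q\ge 2$, $W_{k,q}=\{b_{q-1},c_1,\ldots,c_{k-1}\}$ intersects $P_j$ exactly when $j\ge q$, and the 1-cohesive group $\{1,\ldots,q_0\}$ is disjoint from the approvers of $b_{q_0}$, so JR fails at $q=q_0+1$ but still holds at $q=q_0$.

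The main obstacle will be a global verification of the chosen axiom: one has to check the JR-family condition at the chosen committee against every $\ell$-cohesive coalition rather than just the obvious $V=\{1,\ldots,n\}$. Any such $V=\{j_1,\ldots,j_s\}$ has common intersection $\{a_1,\ldots,a_k,b_1,\ldots,b_{j_1-1}\}$, which caps the cohesiveness at $\ell\le k+j_1-1$; combined with the per-voter counts $|P_j\cap W|$ written above, this quickly shows that the full electorate is the tightest coalition for PJR/EJR and that $\{1,\ldots,q_0\}$ is the tightest for JR, so the explicit checks above suffice. A minor secondary check, that the identified link is a strict Pareto-domination, is immediate from the coordinate-wise comparisons already present in Theorem~\ref{thm: 2-PE-upper}. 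Once these verifications are in place, chaining $nk-1$ Pareto factors with one axiom factor against the DP upper bound yields the three inequalities of the theorem.
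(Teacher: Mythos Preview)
Your approach is essentially the same as the paper's: both reuse the profile and Pareto chain from Theorem~\ref{thm: 2-PE-upper}, observe that one link of the chain witnesses an axiom transition, replace the factor $\beta$ at that link by the axiom factor, and then invoke the $\erm^{n\epsilon}$ bound from Lemma~\ref{lem: dp-neps}. The only difference is in how the transition link is found. The paper argues non-constructively: since $W_{1,1}$ satisfies JR (indeed PJR and EJR) while $W_{k+1,1}=\{c_1,\ldots,c_k\}$ violates JR (it is disjoint from every $P_j$), some adjacent pair $X,Y$ along the chain must have $X$ satisfying the axiom and $Y$ violating it. This handles all three cases uniformly with no further verification. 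You instead explicitly identify a link for each axiom and verify it by hand. Both are correct; the paper's argument is shorter because the existence of a transition follows from connectedness alone, so there is no need to locate or globally verify any particular committee.

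A small cleanup: your choice of different links for JR versus PJR/EJR is unnecessary. The paper's observation that $W_{1,1}\in\EJR(P,k)$ and $W_{k+1,1}\notin\JR(P,k)$ (hence also $\notin\PJR(P,k),\notin\EJR(P,k)$) gives a transition link for all three axioms at once, avoiding the per-axiom case analysis and the edge-case worry about whether $q_0+1\le n$.
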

\begin{pfsketch}
    Here we only take (1) as example. Consider the following voting instance $E=(P,k)$, where $m\geqslant n+2k-1$ and $\{a_1,\ldots,a_k,b_1,\ldots,b_{n-1},c_1,\ldots,c_k\}\subseteq A$.
    \begin{align*}
    P_j = \begin{cases}
            \{a_1,a_2,\ldots,a_k\}, & j=1, \\
            \{a_1,a_2,\ldots,a_k,b_1\}, & j=2, \\
            ~\cdots \\
            \{a_1,a_2,\ldots,a_k,b_1,b_2,\ldots,b_{n-1}\}, & j=n. \\
        \end{cases}
    \end{align*}
    By constructing the same series of committees $W_{p,q}$ as the proof of Theorem \ref{thm: 2-PE-upper}, we find that $W_{1,1}=\{a_1,a_2,\ldots,a_k\}$ satisfies JR for $(P,k)$, while $W_{k+1,1}=\{c_1,c_2,\ldots,c_k\}$ does not satisfies JR for $(P,k)$.
    Therefore, for any neutral ABC rule $f\colon\themap$ satisfying $\epsilon$-DP, $\beta$-PE, and $\theta$-JR, we can prove that
    \begin{align*}
        \p[f(P,k)=W_{1,1}] &\leqslant \erm^{n\epsilon}\cdot \p[f(P,k)=W_{k+1,1}], \text{and} \\
        \p[f(P,k)=W_{1,1}] &\geqslant \beta^{kn-1}\cdot\theta\cdot \p[f(P,k)=W_{k+1,1}].
    \end{align*}
    In other words, we have $\beta^{nk-1}\cdot\theta\leqslant \erm^{n\epsilon}$, which completes the proof. Statements (2) and (3) can be proved similarly.
\end{pfsketch}

Finally, we discuss the tradeoff between PE and CC under DP. Similar to the pairwise combinations of axioms above, the standard form of PE and CC are compatible with each other, since the Condorcet committee is never Pareto-dominated by any other committee. However, under the constraint of DP, $\eta$-CC ignores the Pareto-dominance relationships among committees other than the Condorcet committee. Further, we have the following theorem.

\begin{theorem}[\bf PE against CC]
  \label{thm: 3-PE-CC-upper}
  Given any $\epsilon\in\R_+$, there is no neutral ABC rule satisfying $\epsilon$-DP, $\beta$-PE, and $\eta$-CC with $\beta^{nk-1}\cdot\eta\leqslant \erm^{n\epsilon}$.
\end{theorem}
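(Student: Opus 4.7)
The plan is to construct a voting instance in which $W_{1,1}=\{a_1,\ldots,a_k\}$ is simultaneously the Condorcet committee and the top element of a Pareto-dominance chain of length $nk$, and then combine one application of $\eta$-CC at the top of that chain with the remaining $nk-1$ applications of $\beta$-PE along the chain. This produces a multiplicative gap of $\eta\cdot\beta^{nk-1}$ between two committees inside a single profile; Lemma \ref{lem: dp-neps} (which uses neutrality and $\epsilon$-DP) then caps that gap at $\erm^{n\epsilon}$, yielding the claimed bound (in the direction $\beta^{nk-1}\cdot\eta>\erm^{n\epsilon}$ being impossible).

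First I would reuse the voting instance from the proof of Theorem \ref{thm: 2-PE-upper}, which already provides a Pareto-dominance chain $W_{1,1}\succ W_{1,2}\succ\cdots\succ W_{k+1,1}$ of length $nk$ among the $nk+1$ committees $W_{p,q}$, with $W_{1,1}=\{a_1,\ldots,a_k\}$ at the top and $W_{k+1,1}=\{c_1,\ldots,c_k\}$ at the bottom. Because a Condorcet committee is never Pareto-dominated, the only candidate Condorcet committee inside the chain is $W_{1,1}$ itself, so I would refine the profile (for instance by appending ``rubber-stamp'' voters whose approval sets equal $\{a_1,\ldots,a_k\}$, or by adjusting the $b_i$-approvals in the original voters' preferences) so that $W_{1,1}$ becomes the Condorcet committee while every one of the $nk$ distinguishing-voter relations witnessing the individual Pareto-dominance steps is preserved. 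Given such a refined profile, for any neutral ABC rule $f$ satisfying $\epsilon$-DP, $\beta$-PE, and $\eta$-CC, invoking $\eta$-CC on the pair $(W_{1,1},W_{1,2})$ gives $\p[f(P,k)=W_{1,1}]\geq\eta\cdot\p[f(P,k)=W_{1,2}]$, and iterating $\beta$-PE along the sub-chain $W_{1,2}\succ W_{1,3}\succ\cdots\succ W_{k+1,1}$ of length $nk-1$ gives $\p[f(P,k)=W_{1,2}]\geq\beta^{nk-1}\cdot\p[f(P,k)=W_{k+1,1}]$. Combining these with the Lemma \ref{lem: dp-neps} upper bound $\p[f(P,k)=W_{1,1}]\leq\erm^{n\epsilon}\cdot\p[f(P,k)=W_{k+1,1}]$ would then finish the argument.

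The hard part will be the profile refinement. In the unmodified instance of Theorem \ref{thm: 2-PE-upper}, $W_{1,1}$ is not Condorcet for $k\geq 2$: committees of the form $\{a_2,\ldots,a_k,b_1\}$ tie with $W_{1,1}$ on every voter except voter $1$, so only a single voter strictly prefers $W_{1,1}$, far short of the required majority. I would need to enforce strict pairwise-majority preference for $W_{1,1}$ over every $W'\in\Ak\setminus\{W_{1,1}\}$ without collapsing any of the ``one distinguishing voter'' arguments on which the $nk$ Pareto-dominance steps rely; orchestrating both constraints in a single profile, without relaxing the exponent below $nk-1$, is the delicate technical piece of the proof.
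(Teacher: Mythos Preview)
Your overall strategy is exactly the paper's: reuse the profile and the chain $W_{1,1}\succ W_{1,2}\succ\cdots\succ W_{k+1,1}$ from Theorem~\ref{thm: 2-PE-upper}, apply $\eta$-CC to the pair $(W_{1,1},W_{1,2})$, chain $\beta$-PE over the remaining $nk-1$ steps, and close with Lemma~\ref{lem: dp-neps}. The paper's entire argument is those three lines; it performs \emph{no} refinement of the profile and simply asserts that ``the Condorcet committee for $(P,k)$ is $W_{1,1}$.''

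You are right to flag that assertion. Under the paper's own definition of Condorcet committee, the instance from Theorem~\ref{thm: 2-PE-upper} fails: against $W'=\{a_2,\ldots,a_k,b_1\}$ every voter $j\geq 2$ has $|P_j\cap W_{1,1}|=k=|P_j\cap W'|$, so only voter~$1$ strictly prefers $W_{1,1}$, which is not a majority once $n\geq 2$. So the gap you isolate is real, and the paper does not fill it. Your ``rubber-stamp'' suggestion (pad with extra copies of $\{a_1,\ldots,a_k\}$) does force $W_{1,1}$ to be Condorcet and preserves every Pareto step, but it increases the voter count without lengthening the chain, so Lemma~\ref{lem: dp-neps} would only yield $\beta^{nk-1}\cdot\eta\leq\erm^{N\epsilon}$ for the enlarged $N>n$, strictly weaker than the stated exponent. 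In short, your plan matches the paper's and is in fact more careful about the Condorcet verification; completing the argument with the exact exponent $nk-1$ still requires a construction that neither you nor the paper has supplied.
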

\begin{pfsketch}
    Consider the same $(P,k)$ and $W_{p,q}$ in the proof of Theorem \ref{thm: 2-PE-upper}. It is easy to check that $W_{1,1}$ is the Condorcet committee for $(P,k)$. Therefore, for any neutral ABC rule $f\colon\themap$ satisfying $\epsilon$-DP, $\beta$-PE, and $\eta$-CC, we have
    \begin{align*}
        \p[f(P,k)=W_{1,1}] &\leqslant \erm^{n\epsilon}\cdot \p[f(P,k)=W_{k+1,1}], \text{and} \\
        \p[f(P,k)=W_{1,1}] &\geqslant \eta\cdot\beta^{nk-1}\cdot \p[f(P,k)=W_{k+1,1}].
    \end{align*}
    In other words, we have $\eta\cdot\beta^{nk-1}\leqslant \erm^{n\epsilon}$.
\end{pfsketch}

\section{Conclusion and Future Work}
\label{sec: future}

In the paper, we investigated the tradeoffs among DP and several axioms in approval-based committee voting, including JR, PJR, EJR, PE, and CC. We found that DP is significantly incompatible with all axioms, and quantified their tradeoffs against DP (2-way tradeoff). Further, we studied the tradeoffs between axioms under DP (3-way tradeoff). By capturing upper bounds for 3-way tradeoffs, we showed that DP actually introduces additional tradeoffs among axioms. However, some of the bounds in the paper are not tight. Thus, deriving tighter bounds for both 2-way and 3-way tradeoffs can be a promising future direction. Besides, the time complexities of Mechanism \ref{algo: RR-JR} and \ref{algo: Condorcet-RR} are exponential. Designing mechanisms with polynomial time complexity is also interesting for future work.

\bibliography{ref}

\appendix

\newenvironment{proofwithname}[1][Proof]{{\noindent\it \bf #1.}\ }{\hfill $\square$\par}

\section{Missing Proofs in Section \ref{sec: 2-tradeoff}}

\label{sec: appendix-A}

\subsection{Detailed discussions on incompatibility between DP and axioms}
\label{subsec: appendix-A1}

First, we prove Lemma \ref{lem: dp-neps}.

\begin{proofwithname}[Proof of Lemma \ref{lem: dp-neps}]
  Let $W_1=\{a_1,a_2,\ldots,a_k\}$, $W_2=\{b_1,\ldots,b_k\}$. Then we can construct a permutation $\sigma$ on $A$, such that $\sigma(W_1)=W_2$, $\sigma(W_2)=W_1$, and satisfies $\sigma(a)=a$ for every $a\in A\backslash \{W_1\cup W_2\}$. By neutrality, we have
  \begin{align*}
    \p[f(P,k)=W_2] &= \p[f(\sigma\cdot P, k)=W_1].
  \end{align*}
  Since $W_1\neq W_2$, there are $n$ different votes in $P$ and $\sigma\cdot P$. Therefore, $\epsilon$-DP indicates that
  \begin{align*}
    \p[f(P,k)=W_1] &\leqslant \erm^{n\epsilon}\cdot \p[f(\sigma\cdot P,k)=W_1] \\
    &= \erm^{n\epsilon}\cdot \p[f(P,k)=W_2],
  \end{align*}
  which completes the proof.
\end{proofwithname}

Further, the following lemma can be regarded as a more general version of Corollary \ref{cor: dp}, where neutrality is no longer required.

\begin{lemma}
  \label{lem: supp-dp}
  If an ABC rule $f\colon\themap$ satisfies $\epsilon$-DP for some $\epsilon\in\R_+$, for any voting instances $(P,k)$ and $(P',k)$,
  \begin{align*}
    \supp f(P,k)=\supp f(P',k).
  \end{align*}
\end{lemma}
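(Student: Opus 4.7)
The plan is to exploit the multiplicative structure of $\epsilon$-DP to show that support is invariant under single-voter changes, and then chain this invariance along a path of neighboring profiles to connect any two profiles with the same committee size.

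First I would establish the ``one-step'' version: if $(P,k)$ and $(P'',k)$ are neighboring voting instances, then $\supp f(P,k)=\supp f(P'',k)$. This follows immediately from Definition \ref{def: dp} applied to singletons $O=\{W\}$: for every $W\in\powerset{A}$ we have
\[
\p[f(P,k)=W]\leqslant \erm^\epsilon\cdot \p[f(P'',k)=W]\quad\text{and}\quad \p[f(P'',k)=W]\leqslant \erm^\epsilon\cdot \p[f(P,k)=W],
\]
so one probability is zero if and only if the other is, yielding equal supports.

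Next I would lift this to arbitrary pairs $(P,k)$ and $(P',k)$ by building a chain of neighboring profiles from $P$ to $P'$. Concretely, define $P^{(0)}=P$ and for $j=1,\ldots,n$ let $P^{(j)}$ be obtained from $P^{(j-1)}$ by replacing the $j$-th voter's ballot $P_j$ with $P_j'$. Then $P^{(n)}=P'$, and for each $j$, the profiles $P^{(j-1)}$ and $P^{(j)}$ differ on at most one voter (voter $j$), so $(P^{(j-1)},k)$ and $(P^{(j)},k)$ are neighboring voting instances. Applying the one-step claim along the chain gives
\[
\supp f(P,k)=\supp f(P^{(1)},k)=\cdots=\supp f(P^{(n)},k)=\supp f(P',k),
\]
which is exactly the conclusion.

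I do not anticipate any significant obstacle here: the proof is essentially a two-line observation (DP forces zero-probability outcomes to be preserved under a single neighbor-step) combined with the standard ``change one voter at a time'' chaining argument. The only thing worth being careful about is to ensure that applying Definition \ref{def: dp} to the singleton event $\{W\}$ is legitimate, which it is since $\mathcal{O}=\powerset{A}$ is discrete and $\{W\}\subseteq\mathcal{O}$.
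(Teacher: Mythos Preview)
Your proposal is correct and takes essentially the same approach as the paper: both use the multiplicative DP bound along a chain of neighboring profiles to show that positive probability is preserved, and hence that supports coincide. The only cosmetic difference is that the paper compresses the chaining into a single inequality $\p[f(P',k)=W]\geqslant \erm^{-|\{j:P_j\neq P_j'\}|\cdot\epsilon}\cdot \p[f(P,k)=W]>0$ and then argues both inclusions, whereas you first isolate the one-step support equality and then chain it explicitly.
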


\begin{proof}
  Let $(P,k)$ and $(P',k)$ be two voting instances. Since $f$ satisfies $\epsilon$-DP, for any committee $W\in\supp f(P,k)$, we have
  \begin{align*}
    \p[f(P',k)=W] &\geqslant \erm^{-|j\in N: P_j\neq P_j'|\cdot\epsilon}\cdot \p[f(P,k)=W] \\
    &\geqslant \erm^{-n\epsilon}\cdot \p[f(P,k)=W] \\
    &>0.
  \end{align*}
  In other words, we have $W\in\supp f(P',k)$, which indicates that $\supp f(P,k)\subseteq \supp f(P',k)$. Similarly, we have $\supp f(P',k)\subseteq \supp f(P,k)$, which completes the proof.
\end{proof}

In fact, without the assumption of neutrality, we can still obtain the incompatibility between DP and axioms, as shown in the following corollary.

\begin{corollary}
  If an ABC rule $f\colon\themap$ satisfies $\epsilon$-DP for some $\epsilon\in\R_+$, then $f$ does not satisfy JR, PE, or CC.
\end{corollary}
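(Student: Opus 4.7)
\begin{pfsketch}
The plan is to leverage Lemma \ref{lem: supp-dp}, which states that under $\epsilon$-DP the support $\supp f(P,k)$ does not depend on the profile $P$. Writing $S$ for this common support, we have $S\neq\varnothing$ (since $f(P,k)$ is a probability distribution), so I would fix an arbitrary $W\in S$ and then, for each of the three axioms, construct a single voting instance whose axiom-compliant supports exclude $W$, contradicting $W\in S$.

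For PE and CC I would exploit the same ``unanimity'' witness. Choose any $W'\in\Ak$ with $W'\neq W$ (possible because $\binom{m}{k}\geqslant 2$) and let $P$ be the profile with $P_j=W'$ for every voter $j\in N$. Because $|P_j\cap W'|=k$ while $|P_j\cap W''|=|W'\cap W''|\leqslant k-1$ for any $W''\neq W'$, the committee $W'$ strictly Pareto-dominates $W$ and is the unique Condorcet committee of $(P,k)$. Hence PE would require $W\notin\supp f(P,k)$ and CC would require $\supp f(P,k)=\{W'\}$, both contradicting $W\in S$.

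For JR I would use a different unanimous profile exposing a narrowly-focused $1$-cohesive group. Pick $a\in A\setminus W$ (which exists because $|W|=k<m$) and let $P$ have $P_j=\{a\}$ for every $j\in N$. Then $V=N$ satisfies $|V|\geqslant n/k$ and $\bigcap_{i\in V}P_i=\{a\}$, so $V$ is $1$-cohesive; JR then demands $|P_i\cap W|\geqslant 1$ for some $i\in V$, i.e., $a\in W$, which fails by the choice of $a$. Thus $W\notin\JR(P,k)$ yet $W\in S=\supp f(P,k)$, giving the desired contradiction.

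The only delicate point I foresee is the order of quantifiers: $W$ must be fixed first using Lemma \ref{lem: supp-dp}, and only then the profile tailored to exclude that specific $W$; without this ordering, one might be tempted to look for two profiles with globally incompatible axiom-compliant supports, which is unnecessary. With the ordering in place, each axiom is defeated uniformly by an elementary unanimous witness, so no further obstacle arises.
\end{pfsketch}
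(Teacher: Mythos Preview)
Your proof is correct. Both you and the paper rely on Lemma~\ref{lem: supp-dp} (profile-independent support under DP), but you deploy it differently. The paper argues by \emph{two-profile comparison} for JR and CC: it exhibits two instances whose axiom forces distinct supports (e.g., two profiles with different unique JR committees, or different Condorcet committees), then invokes the lemma to obtain a contradiction; for PE it instead shows that every $W$ is excluded from the support of \emph{some} profile, hence from the common support, making the latter empty. You instead adopt a single uniform template: fix one $W$ in the common support $S$ first, then tailor a single unanimous profile that forces the axiom to exclude precisely that $W$. This is slightly more economical---one witness profile per axiom, no need to produce two profiles with globally disjoint axiom-compliant sets---and it makes the role of the quantifier order (which you rightly flag) explicit. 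The paper's two-profile route, on the other hand, sidesteps the need to know anything about $S$ in advance and reuses constructions that appear elsewhere in the paper. Both are elementary; yours is a bit tidier.
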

\begin{proof}
  We only need to prove that any ABC rule satisfying JR, PE, or CC cannot satisfy DP.

  Firstly, let $f$ be an ABC rule satisfying JR. Then for any $W,W'\in\Ak$, we can construct two distinct voting instances $(P,k)$ and $(P',k)$, such that $W$ and $W'$ are their unique JR committee, respectively (each consists of $k$ disjoint $1$-cohesive groups). Then $\supp f(P,k) = \{W\}\neq \{W'\}\supp f(P',k)$, which indicates that $f$ does not satisfy $\epsilon$-DP for any $\epsilon$.

  Secondly, let $f$ be an ABC rule satisfying PE. Then for any committee $W\in\Ak$, considering the voting instance $(P,k)$, where $P_j=A\backslash W$, we have $\p[f(P,k)=W]=0$, since $W$ must be Pareto dominated by another committee. In other words, we have $\p[f(P,k)=W]=0$ for all $W\in\Ak$, a contradiction.

  Finally, for CC, it is easy to construct two voting instance $(P,k)$ and $(P',k)$ with different Condorcet committees $W$ and $W'$. Then $\supp f(P,k)\neq \supp f(P,k)$. By Lemma \ref{lem: supp-dp}, $f$ does not satisfy $\epsilon$-DP for any $\epsilon$, which completes the proof.
\end{proof}

\subsection{Missing proofs in subsection \ref{subsec: 2-JRs}}

Lemma \ref{lem: dp-neps} has been proved in the previous subsection (Appendix A.1).

\begin{proofwithname}[Proof of Theorem \ref{thm: 2-JR-upper}]
  Suppose $f\colon \themap$ be an ABC rule satisfying $\epsilon$-DP and $\theta$-JR.
  Consider the following voting instances $(P,k)$ and $(P',k)$, where $s=\lceil n/k\rceil$:
  \begin{align*}
    P_j &= \begin{cases}
      \{a_1\}, & 1\leqslant j\leqslant s, \\
      \{a_2\}, & s+1\leqslant j\leqslant 2s-1, \\
      A\backslash\{a_1,a_2\}, & \text{otherwise}.
    \end{cases} \\
    P'_j &= \begin{cases}
      \{a_1\}, & 1\leqslant j\leqslant s-1, \\
      \{a_2\}, & s\leqslant j\leqslant 2s-1, \\
      A\backslash\{a_1,a_2\}, & \text{otherwise}.
    \end{cases}
  \end{align*}
  It is not hard to see that $P_s\neq P_s'$ and $P_{-s}=P_{-s}'$, i.e., $P$ and $P'$ are neighboring profiles. Further, voters $\{1,2,\ldots,s\}$ and $\{s,s+1,\ldots,2s-1\}$ form $1$-cohesive groups for $(P,k)$ and $(P',k)$, respectively. Therefore, committee $W=\{a_1,a_3,a_4,\ldots,a_{k+1}\}$ satisfies JR for $(P,k)$, but does not satisfy JR for $(P',k)$. Similarly, $W'=\{a_2,a_3,a_4,\ldots,a_{k+1}\}$ satisfies JR for $(P',k)$, but does not satisfy JR for $(P,k)$. Then we have
  \begin{align*}
    \p[f(P,k)=W] &\geqslant \theta\cdot \p[f(P,k)=W'] \tag*{($\theta$-JR)} \\
    &\geqslant \theta\cdot \erm^{-\epsilon}\cdot \p[f(P',k)=W'] \tag*{($\epsilon$-DP)} \\
    &\geqslant \theta^2\cdot \erm^{-\epsilon}\cdot \p[f(P',k)=W] \tag*{($\theta$-JR)} \\
    &\geqslant \theta^2\cdot \erm^{-2\epsilon}\cdot \p[f(P,k)=W]. \tag*{($\epsilon$-DP)}
  \end{align*}
  In other words, $\theta^2\cdot \erm^{-2\epsilon}\leqslant 1$, i.e., $\theta\leqslant \erm^\epsilon$, which completes the proof.
\end{proofwithname}

\begin{proofwithname}[Proof of Theorem \ref{thm: 2-PJR-upper}]
  Suppose $f\colon \themap$ be an ABC rule satisfying $\epsilon$-DP and $\rho$-PJR.
  Consider the following voting instances $(P,k)$ and $(P',k)$, where $n=s\cdot k$:
  \begin{align*}
    P_j &= \begin{cases}
      \{a_1,a_{k+1}\}, & j=1, \\
      \{a_1\}, & 2\leqslant j\leqslant s, \\
      \{a_2\}, & s+1\leqslant j\leqslant 2s, \\
      ~\cdots \\
      \{a_k\}, & m-k+1\leqslant j\leqslant m.
    \end{cases} \\
    P_j' &= \begin{cases}
      \{a_1,a_{k+2}\}, & j=1, \\
      \{a_1\}, & 2\leqslant j\leqslant s, \\
      \{a_2\}, & s+1\leqslant j\leqslant 2s, \\
      ~\cdots \\
      \{a_k\}, & m-k+1\leqslant j\leqslant m.
    \end{cases}
  \end{align*}
  It is quite evident that $P$ and $P'$ are neighboring profiles, since they only differ on the first voter's vote. Further, we claim that $W=\{a_2,a_3,\ldots,a_k,a_{k+1}\}$ and $W'=\{a_2,a_3,\ldots,a_k,a_{k+2}\}$ satisfies PJR for $P$ and $P'$, respectively. Besides, it is not hard to check that $W\notin \PJR(P',k)$ and $W'\notin \PJR(P,k)$. Therefore,
  \begin{align*}
    \p[f(P,k)=W] &\geqslant \rho\cdot \p[f(P,k)=W'] \tag*{($\rho$-PJR)} \\
    &\geqslant \rho\cdot \erm^{-\epsilon}\cdot \p[f(P',k)=W'] \tag*{($\epsilon$-DP)} \\
    &\geqslant \rho^2\cdot \erm^{-\epsilon}\cdot \p[f(P',k)=W] \tag*{($\rho$-PJR)} \\
    &\geqslant \rho^2\cdot \erm^{-2\epsilon}\cdot \p[f(P,k)=W], \tag*{($\epsilon$-DP)}
  \end{align*}
  which indicates that $\rho^2\cdot \erm^{-2\epsilon}\leqslant 1$, i.e., $\rho\leqslant \erm^{\epsilon}$. That completes the proof.
\end{proofwithname}

\begin{proofwithname}[Proof of Theorem \ref{thm: 2-EJR-upper}]
  Suppose $f\colon \themap$ be an ABC rule satisfying $\epsilon$-DP and $\kappa$-EJR.
  Consider the following voting instances $(P,k)$ and $(P',k)$, where $s=\lceil n/k\rceil$:
  \begin{align*}
    P_j &= \begin{cases}
      \{a_1\}, & 1\leqslant j\leqslant s, \\
      \{a_2\}, & s+1\leqslant j\leqslant 2s \\
      ~\cdots \\
      \{a_k\}, & (k-1)s+1\leqslant j\leqslant m.
    \end{cases} \\
    P_j' &= \begin{cases}
      \{a_{k+1}\}, & 1\leqslant j\leqslant s, \\
      \{a_2\}, & s+1\leqslant j\leqslant 2s \\
      ~\cdots \\
      \{a_k\}, & (k-1)s+1\leqslant j\leqslant m.
    \end{cases}
  \end{align*}
  Without loss of generality, suppose $n=s\cdot k$. Then for each $t\in\{0,1,\ldots,m-1\}$, the set of voters $\{ts+1,ts+2,\ldots, (t+1)s\}$ forms a $1$-cohesive group in both $P$ and $P'$. Therefore, $W=\{a_1,a_2,\ldots,a_k\}$ and $W'=\{a_{k+1},a_2,a_3,\ldots,a_k\}$ are the unique JR committee for $(P,k)$ and $(P',k)$, respectively. Noting that there is no $\ell$-cohesive group ($\ell\geqslant 2$) for either $P$ or $P'$, $W$ and $W'$ are the unique EJR committee for $(P,k)$ and $(P',k)$, respectively. Further, since $f$ satisfies $\kappa$-EJR, we have
  \begin{align}
    \label{equ: proof-EJR-kappa}
    \begin{aligned}
        \p[f(P,k)=W] &\geqslant \kappa\cdot \p[f(P,k)=W'], \text{and} \\
        \p[f(P',k)=W'] &\geqslant \kappa\cdot \p[f(P',k)=W].
    \end{aligned}
  \end{align}
  On the other hand, $P$ and $P'$ differ on $s$ voters' vote. By $\epsilon$-DP, for all $X\in\Ak$
  \begin{align}
    \label{equ: proof-EJR-epsilon}
    \begin{aligned}
        \erm^{-s\epsilon}\cdot \p[f(P',k)=X] &\leqslant \p[f(P,k)=X] \\
    &\leqslant \erm^{s\epsilon}\cdot \p[f(P',k)=X].
    \end{aligned}
  \end{align}
  Further, we have
  \begin{align*}
    \p[f(P,k)=W] &\geqslant \kappa\cdot \p[f(P,k)=W'] \tag*{(By Equation (\ref{equ: proof-EJR-kappa}))} \\
    &\geqslant \kappa\cdot \erm^{-\lceil\frac{n}{k}\rceil\cdot\epsilon}\cdot \p[f(P',k)=W'] \tag*{(By Equation (\ref{equ: proof-EJR-epsilon}))} \\
    &\geqslant \kappa^2\cdot \erm^{-\lceil\frac{n}{k}\rceil\cdot\epsilon}\cdot \p[f(P',k)=W] \tag*{(By Equation (\ref{equ: proof-EJR-kappa}))} \\
    &\geqslant \kappa^2\cdot \erm^{-2\lceil\frac{n}{k}\rceil\cdot\epsilon}\cdot \p[f(P,k)=W]. \tag*{(By Equation (\ref{equ: proof-EJR-epsilon}))}
  \end{align*}
  Therefore, $\kappa^2\cdot \erm^{-2\lceil\frac{n}{k}\rceil\cdot\epsilon}\leqslant 1$, i.e., $\kappa\leqslant \erm^{\lceil\frac{n}{k}\rceil\cdot\epsilon}$, which completes the proof.
\end{proofwithname}

\begin{proofwithname}[Proof of Proposition \ref{prop: 2-JR-lower}]
  We only prove for JR here, proofs for the cases when $\mathcal{A}=\PJR/\EJR$ can be obtained in the same way. Let $\mathfrak{R}_{\JR}\colon \themap$ denote the ABC rule induced by Mechanism \ref{algo: RR-JR} with axiom $\mathcal{A}=\JR$ and noise level $\epsilon\in\R_+$. For any voting instance $(P,k)$, supposing $t=|\JR(P,k)|$, the winning probability of each committee $W\in\Ak$ satisfies
  \begin{align*}
    \p[f(P,k)=W] = \begin{cases}
      \frac{\erm^{\epsilon/2}}{t\cdot\erm^{\epsilon/2} + \binom{m}{k}-t}, & W\in\JR(P,k), \\
      \frac{1}{t\cdot\erm^{\epsilon/2} + \binom{m}{k}-t}, & \text{otherwise}.
    \end{cases}
  \end{align*}
  Then, for any neighboring profile $P'$ of $P$ (let $t'=|\JR(P',k)|$) and committee $W\in\Ak$,
  \begin{align*}
    \frac{\p[f(P,k)=W]}{\p[f(P',k)=W]} &\leqslant \frac{\erm^{\epsilon/2}}{t\cdot\erm^{\epsilon/2} + \binom{m}{k}-t} / \frac{1}{t'\cdot\erm^{\epsilon/2} + \binom{m}{k}-t'} \\
    &= \erm^{\epsilon/2}\cdot \frac{t'\cdot\erm^{\epsilon/2} + \binom{m}{k}-t'}{t\cdot\erm^{\epsilon/2} + \binom{m}{k}-t} \\
    &\leqslant \erm^{\epsilon/2}\cdot \frac{\erm^{\epsilon/2}\cdot \binom{m}{k}}{\binom{m}{k}} \\
    &= \erm^\epsilon,
  \end{align*}
  which indicates that $\mathfrak{R}_{\JR}$ satisfies $\epsilon$-DP. Further, according to the definition of Mechanism \ref{algo: RR-JR}, it is easy to check that for any voting instance $(P,k)$ and committees $W\in\JR(P,k)$ and $W'\notin \JR(P,k)$,
  \begin{align*}
    \frac{\p[f(P,k)=W]}{\p[f(P,k)=W']} = \frac{\erm^{\chi(W)\cdot\epsilon/2}}{\erm^{\chi(W)\cdot\epsilon/2}} = \erm^{\epsilon/2},
  \end{align*}
  i.e., $\mathfrak{R}_{\JR}$ satisfies $\erm^{\epsilon/2}$-JR, which completes the proof.
\end{proofwithname}

\subsection{Missing proofs in subsection \ref{subsec: 2-eff}}

\begin{proofwithname}[Proof of Theorem \ref{thm: 2-PE-upper}]
    Let $f\colon\themap$ be an ABC rule satisfying $\epsilon$-DP and $\beta$-PE. Consider the following voting instance $E=(P,k)$, where $m\geqslant n+2k-1$ and $\{a_1,\ldots,a_k,b_1,\ldots,\linebreak b_{n-1},c_1,\ldots,c_k\}\subseteq A$.
    \begin{align*}
      P_j = \begin{cases}
        \{a_1,a_2,\ldots,a_k\}, & j=1, \\
        \{a_1,a_2,\ldots,a_k,b_1\}, & j=2, \\
        ~\cdots \\
        \{a_1,a_2,\ldots,a_k,b_1,b_2,\ldots,b_{n-1}\}, & j=n. \\
      \end{cases}
    \end{align*}
    Then we can construct the following committees.
    \begin{table}[H]
      \renewcommand\arraystretch{1.25}
      \centering
      \begin{tabular}{@{}l@{~}l@{~}l@{~}l@{}}
        $\{a_1,\ldots,a_k\}$ & $\{a_1,\ldots,a_{k-1},b_1\}$ & $\cdots$ & $\{a_1,\ldots,a_{k-1},b_{n-1}\}$ \\
        $\{a_1,\ldots,a_{k-1},c_1\}$ & $\{a_1,\ldots,a_{k-2},b_1,c_1\}$ & $\cdots$ & $\{a_1,\ldots,a_{k-2},b_{n-1},c_1\}$ \\
        $\vdots$ & $\vdots$ &  & $\vdots$ \\
        $\{a_1,c_1,\ldots,c_{k-1}\}$ & $\{b_1,c_1,\ldots,c_{k-1}\}$ & $\cdots$ & $\{b_{n-1},c_1,\ldots,c_{k-1}\}$ \\
        $\{c_1,\ldots,c_k\}$ &&&
      \end{tabular}
    \end{table}
    To be more concise, we use $W_{p,q}$ to denote the committee on $p$-th row and $q$-th column in the matrix above. Then for all $1\leqslant p\leqslant k$ and $1\leqslant q\leqslant n$, we have
    \begin{align*}
      W_{p,q} = \begin{cases}
        \{a_1,\ldots,a_{k-p+1},c_1,\ldots,c_{p-1}\}, & q=1, \\
        \{a_1,\ldots,a_{k-p},b_{q-1},c_1,\ldots,c_{p-1}\}, & q>1.
      \end{cases}
    \end{align*}
    For the sake of simplicity, we use $\Omega_P(W)$ to denote the series consisting of $|W\cap P_j|$ for each voter $j\in N$, i.e., $\Omega_P\colon \Ak\to\mathbb{N}^n,~W\mapsto (|W\cap P_1|,|W\cap P_2|,\ldots,|W\cap P_n|)$. Then it is easy to check that $\Omega_P(W_{p,q})$ forms the Table \ref{tab: OmegaW}.
    \begin{table*}[h]
        \caption{\boldmath Table of $\Omega(W_{p,q})$}
        \renewcommand\arraystretch{1.25}
        \centering
        \begin{tabular}{cccc}
            $(k,k,\ldots,k)$ & $(k-1,k,\ldots,k)$ & $\cdots$ & $(k-1,\ldots,k-1,k)$ \\
            $(k-1,k-1,\ldots,k-1)$ & $(k-2,k-1,\ldots,k-1)$ & $\cdots$ & $(k-2,\ldots,k-2,k-1)$ \\
            $\vdots$ & $\vdots$ &  & $\vdots$ \\
            $(1,1,\ldots,1)$ & $(0,1,\ldots,1)$ & $\cdots$ & $(0,\ldots,0,1)$ \\
            $(0,0,\ldots,0)$ &&&
        \end{tabular}
        \label{tab: OmegaW}
    \end{table*}
    In other words, we have
    \begin{align*}
      \Omega_P(W_{p,q}) = (\overbrace{k-p,\ldots,k-p}^{q-1},\underbrace{k-p+1,\ldots,k-p+1}_{n-q+1}).
    \end{align*}
    Then, according to the definition of Pareto-dominance, we have $W_{p_1,q_1}$ Pareto dominates $W_{p_2,q_2}$ if $p_1<p_2$ or $p_1=p_2$ and $q_1<q_2$. The relationship of Pareto-dominance among $W_{p,q}$ can be visualized as the following diagram, where $X\to Y$ indicates that $X$ Pareto-dominates $Y$.
    \begin{align*}
      \begin{matrix}
        W_{1,1} & \longrightarrow & W_{1,2} & \longrightarrow & \cdots & \longrightarrow & W_{1,n} \\
        &&&&&& \downarrow \\
        W_{2,n} & \longleftarrow & W_{2,n-1} & \longleftarrow & \cdots & \longleftarrow & W_{2,1} \\
        \downarrow \\
        \vdots \\
        \downarrow \\
        W_{k,1} & \longrightarrow & W_{k,2} & \longrightarrow & \cdots & \longrightarrow & W_{k,n} & \longrightarrow & W_{k+1,1}
      \end{matrix}
    \end{align*}
    There is totally $nk$ arrows in the diagram. Since $f$ satisfies $\beta$-PE, we have
    \begin{align*}
      \p[f(P,k)=W_{1,1}] \geqslant \beta^{nk}\cdot \p[f(P,k)=W_{k+1,1}].
    \end{align*}
    However, since $f$ satisfies $\epsilon$-DP, Lemma \ref{lem: dp-neps} indicates that
    \begin{align*}
      \p[f(P,k)=W_{1,1}] \leqslant \erm^{n\epsilon}\cdot \p[f(P,k)=W_{k+1,1}].
    \end{align*}
    In other words, we have $\beta^{nk}\leqslant \erm^{n\epsilon}$, i.e., $\beta\leqslant \erm^{\epsilon/k}$, which completes the proof.
  \end{proofwithname}

  \begin{proofwithname}[Proof of Proposition \ref{prop: 2-PE-lower}]
    Let $f$ denote the ABC rule corresponding to Mechanism \ref{algo: AVExp}. Then, for any committee $W\in\Ak$, we have
    \begin{align*}
      \p[f(P,k)=W] &\propto \prod_{a\in W} \erm^{\chi(a)\cdot\epsilon/(2k)} \\
      &= \erm^{\sum\nolimits_{\chi(a)}\cdot\epsilon/(2k)} \\
      &= \erm^{{\rm AV}_P(W)\cdot\epsilon/(2k)}.
    \end{align*}
    First, we prove $f$ satisfies $\epsilon$-DP. Suppose $P$ and $P'$ are a pair of neighboring profiles, satisfying $P_i\neq P_i'$ and $P_{-i} = P_{-i}'$. Then the scores of any committee $W$ in $P$ and $P'$ satisfy
    \begin{align*}
        {\rm AV}_P(W) - {\rm AV}_{P'} \leqslant {\rm AV}_{P_i}(W)-{\rm AV}_{P_i'}(W) \leqslant k.
    \end{align*}
    Therefore, we have
    \begin{align*}
        \frac{\p[f(P,k)=W]}{\p[f(P',k)=W]} = \frac{\erm^{{\rm AV}_P(W)\cdot \epsilon/(2k)}}{\erm^{{\rm AV}_{P'}(W)\cdot \epsilon/(2k)}} \leqslant \erm^{\epsilon},
    \end{align*}
    which indicates that $f$ satisfies $\epsilon$-DP.
  
    Next, we prove the Pareto bound of $f$. Suppose that in profile $P$, committee $W$ Pareto dominates $W'$. Then we have
    \begin{itemize}
        \item for all $j\in N$, $|P_j\cap W| - |P_j\cap W'|\geqslant 0$;
        \item for at least one $j\in N$, $|P_j\cap W| - |P_j\cap W'| \geqslant 1$.
    \end{itemize}
    Therefore, their scores satisfy
    \begin{align*}
        {\rm AV}_P(W) - {\rm AV}_P(W') = \sum\limits_{j\in N} {\rm AV}_{P_j}(W)-{\rm AV}_{P_j}(W') \geqslant 1.
    \end{align*}
    Then we have
    \begin{align*}
        \frac{\p[f(P,k)=W]}{\p[f(P,k)=W']} \geqslant \erm^{\frac{\epsilon}{2k}},
    \end{align*}
    which indicates that $f$ satisfies $\erm^{\frac{\epsilon}{2k}}$-Pareto efficiency. That completes the proof.
  \end{proofwithname}

  \begin{proofwithname}[Proof of Theorem \ref{thm: 2-CC-upper}]
    Let $f\colon\themap$ be an ABC rule satisfying $\epsilon$-DP and $\eta$-CC.
    Consdier the voting instances $(P,k)$ and $(P',k)$, where $n=2t+1$ and there exists a $W\in\powerset{A,k-1}$ that
    \begin{align*}
      P_j &= \begin{cases}
        W\cup \{a_1\}, & 1\leqslant j\leqslant t+1, \\
        W\cup \{a_2\}, & t+2\leqslant j\leqslant 2t+1.
      \end{cases} \\
      P_j' &= \begin{cases}
        W\cup \{a_1\}, & 1\leqslant j\leqslant t, \\
        W\cup \{a_2\}, & t+1\leqslant j\leqslant 2t+1.
      \end{cases}
    \end{align*}
    Then $P$ and $P'$ are neighboring profiles, since they only differ on the $(t+1)$-th voter's vote. Further, according to the definition, $W\cup \{a_1\}$ is the Condorcet committee in $(P,k)$, since it is approved by $t+1$ out of $2t+1$ voters (more than a half). Similarly, $W\cup \{a_2\}$ is the Condorcet committee in $(P',k)$. Further, we have
    \begin{align*}
      \p[f(P,k)=W\cup \{a_1\}] \geqslant&\; \eta\cdot \p[f(P,k)=W\cup \{a_2\}] \\
      \geqslant&\; \erm^{-\epsilon}\cdot \eta\cdot \p[f(P',k)=W\cup \{a_2\}] \\
      \geqslant&\; \erm^{-\epsilon}\cdot \eta^2\cdot \p[f(P',k)=W\cup \{a_1\}] \\
      \geqslant&\; \erm^{-2\epsilon}\cdot \eta^2\cdot \p[f(P,k)=W\cup \{a_1\}].
    \end{align*}
    In other words, $\erm^{-2\epsilon}\cdot \eta^2\leqslant 1$, i.e., $\eta\leqslant \erm^{\epsilon}$, which completes the proof.
  \end{proofwithname}

  \begin{proofwithname}[Proof of Proposition \ref{prop: 2-CC-lower}]
    Let $f$ denote the ABC rule corresponding to Mechanism \ref{algo: Condorcet-RR}. First, we prove that $f$ satisfies $\epsilon$-DP. In fact, given any neighboring voting instances $(P,k)$ and $(P',k)$, for each committee $W\in\Ak$, we have
    \begin{align*}
        \frac{\p[f(P,k)=W]}{\p[f(P',k)=W]} \leqslant \frac{\sup_W \p[f(P,k)=W]}{\inf_W \p[f(P,k)=W]} = \erm^{\epsilon},
    \end{align*}
    which indicates that $f$ satisfies $\epsilon$-DP.

    Next, we prove that $f$ satisfies $\erm^\epsilon$-Condorcet criterion. Let $P\in\powerset{A}^n$ be a profile, where exists a Condorcet committee $W_c$. Then for any $W\in\Ak\backslash \{W_c\}$,
    \begin{align*}
        \frac{\p[f(P,k)=W_c]}{\p[f(P,k)=W]} = \frac{\erm^{\epsilon}}{\erm^{\epsilon}+\binom{m}{k}-1} / \frac{1}{\erm^{\epsilon}+\binom{m}{k}-1} = \erm^\epsilon.
    \end{align*}
    In other words, $f$ satisfies $\erm^{\epsilon}$-Condorcet criterion, which completes the proof.
\end{proofwithname}

\section{Missing Proofs in Section \ref{sec: 3-tradeoff}}
\label{sec: appendix-B}

\subsection{The compatibility between axioms without DP}
\label{sec: appendix-B1}

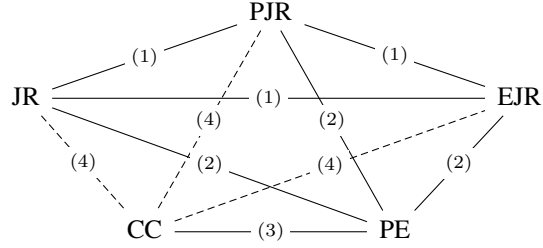
\begin{figure}[H]
  \centering
  \begin{tikzcd}
    && {\text{PJR}} \\
    {\text{JR}} &&&& {\text{EJR}} \\
    \\
    & {\text{CC}} && {\text{PE}}
    \arrow["{(1)}"{description}, no head, from=1-3, to=2-5]
    \arrow["{(4)}"{description}, dashed, no head, from=1-3, to=4-2]
    \arrow["{(2)}"{description}, no head, from=1-3, to=4-4]
    \arrow["{(1)}"{description}, no head, from=2-1, to=1-3]
    \arrow["{(1)}"{description}, no head, from=2-1, to=2-5]
    \arrow["{(4)}"{description}, dashed, no head, from=2-1, to=4-2]
    \arrow["{(2)}"{description}, no head, from=2-1, to=4-4]
    \arrow["{(2)}"{description}, no head, from=2-5, to=4-4]
    \arrow["{(4)}"{description}, dashed, no head, from=4-2, to=2-5]
    \arrow["{(3)}"{description}, no head, from=4-2, to=4-4]
  \end{tikzcd}
  \caption{Compatibility between axioms, where a solid line indicates that the axioms are compatible, while a dash line indicates that the axioms are incompatible.}
\end{figure}

First of all, the compatibility among JR-family (denoted by (1)) is evident, since there are implication relationships among them. The correctness of (2) is ensured by the proportional approval voting (PAV) rule, which satisfies PE and EJR \citep{aziz2017justified}. Further, (3) is correct since the Condorcet committee is never Pareto-dominated. Finally, CC and JR are incompatible (denoted by (4)) since the Condorcet committee does not necessarily satisfy JR. For example, consider the following voting instance $(P,k)$, where $n=2t+1$.
\begin{align*}
  P_j = \begin{cases}
    \{a_1,a_2,\ldots,a_k\}, & 1\leqslant j\leqslant t+1, \\
    \{a_{k+1},a_{k+2},\ldots,a_{2k}\}, & t+2\leqslant 2t+1.
  \end{cases}
\end{align*}
When $k\geqslant 3$, both $\{1,2,\ldots,t+1\}$ and $\{t+2,t+3,\ldots,2t+1\}$ are $1$-cohesive for $(P,k)$. However, the Condorcet committee of this voting instance is $\{a_1,a_2,\ldots,a_k\}$, which does not offer any seat to $\{t+2,t+3,\ldots,2t+1\}$. Therefore, CC is incompatible with JR (and PJR/EJR).

\subsection{The missing proofs}

\begin{proofwithname}[Proof of Theorem \ref{thm: 3-JR-PJR-upper}]
  First, we prove (1). Let $f\colon\themap$ be an ABC rule satisfying $\epsilon$-DP, $\theta$-JR, and $\rho$-PJR. Consider the following neighboring voting instances $(P,k)$ and $(P',k)$, where $s=\lceil 2n/k\rceil$:
  \begin{align*}
    P_j &= \begin{cases}
      \{a_1,a_2\}, & 1\leqslant j\leqslant s, \\
      \{a_1,a_3\}, & s+1\leqslant j\leqslant 2s-1, \\
      \{a_4,a_5\}, & 2s\leqslant j\leqslant 3s-1, \\
      \{a_4\}, & \text{otherwise}.
    \end{cases} \\
    P_j' &= \begin{cases}
      \{a_1,a_2\}, & 1\leqslant j\leqslant s-1, \\
      \{a_1,a_3\}, & s\leqslant j\leqslant 2s-1, \\
      \{a_4,a_5\}, & 2s\leqslant j\leqslant 3s-1, \\
      \{a_4\}, & \text{otherwise}.
    \end{cases}
  \end{align*}
  
  By the definition of cohesive groups, it is easy to check that $V=\{1,2,\ldots,s\}$ and $V'=\{s,s+1,\ldots,2s-1\}$ are $2$-cohesive for $(P,k)$ and $(P',k)$, respectively. For both $P$ and $P'$, $\{2s,2s+1,\ldots,3s-1\}$ forms a $2$-cohesive group, while $\{1,2,\ldots,2s-1\}$ and $\{2s,2s+1,\ldots,n\}$ are $1$-cohesive. Further, there is no $\ell$-cohesive group with $\ell\geqslant 3$ for both voting instances.
  
  Therefore, $W_1=\{a_1,a_2,a_4,a_5,a_6,\ldots,a_{k+1}\}$ satisfies PJR for $(P,k)$, but does not satisfy even JR for $(P',k)$, since it offers only one seat for the $2$-cohesive group $V'$. Similarly, $W_1'=\{a_1,a_3,a_4,a_5,a_6,\linebreak \ldots,a_{k+1}\}$ satisfies PJR for $(P',k)$, but does not satisfy JR for $(P,k)$. Besides, $W_0=\{a_1,a_4,a_6,a_7,\ldots,a_{k+3}\}$ satisfies JR, but does not satisfy PJR for both voting instances.

  Since $f$ satisfies $\epsilon$-DP, $\theta$-JR, and $\rho$-PJR, we have
  \begin{align*}
    \p[f(P,k)=W_1] &\geqslant \rho\cdot \p[f(P,k)=W_0] \tag*{($\rho$-PJR)} \\
    &\geqslant \rho\cdot\theta\cdot \p[f(P,k)=W_1'] \tag*{($\theta$-JR)} \\
    &\geqslant \rho\cdot\theta\cdot\erm^{-\epsilon}\cdot \p[f(P',k)=W_1'] \tag*{($\epsilon$-DP)} \\
    &\geqslant \rho^2\cdot\theta\cdot\erm^{-\epsilon}\cdot \p[f(P',k)=W_0] \tag*{($\rho$-PJR)} \\
    &\geqslant \rho^2\cdot\theta^2\cdot\erm^{-\epsilon}\cdot \p[f(P',k)=W_1] \tag*{($\theta$-JR)} \\
    &\geqslant \rho^2\cdot\theta^2\cdot\erm^{-2\epsilon}\cdot \p[f(P,k)=W_1] \tag*{($\epsilon$-DP)}.
  \end{align*}
  In other words, we have $\rho^2\cdot\theta^2\cdot\erm^{-2\epsilon}\leqslant 1$, i.e., $\rho\cdot\theta\leqslant \erm^\epsilon$, which completes the proof of (1).

  For (2), we still consider the two voting instances defined above. In fact, $W_1$ and $W_1'$ satisfy EJR for $(P,k)$ and $(P',k)$, respectively. Therefore, by comparing the winning probability of $W_0$, $W_1$, and $W_1'$, we will also end up with 
  \begin{align*}
    \p[f(P,k)=W_1] \leqslant \kappa^2\cdot\theta^2\cdot\erm^{-2\epsilon}\cdot \p[f(P,k)=W_1].
  \end{align*}
  In other words, we have $\kappa\cdot \theta\leqslant \erm^\epsilon$, which completes the proof.
\end{proofwithname}

\begin{proofwithname}[Proof of Theorem \ref{thm: 3-PJR-EJR-upper}]
  Let $f\colon\themap$ be an ABC rule satisfying $\epsilon$-DP, $\rho$-PJR, and $\kappa$-EJR. Consider the following voting instance $(P,k)$, where $s=\lceil n/k\rceil$:
  \begin{align*}
    P_j = \begin{cases}
      \{a_1,a_2,\ldots,a_k,a_{k+1}\}, & 1\leqslant j\leqslant s, \\
      \{a_1,a_2,\ldots,a_k,a_{k+2}\}, & s+1\leqslant j\leqslant 2s, \\
      ~\cdots \\
      \{a_1,a_2,\ldots,a_k,a_{2k}\}, & (k-1)s+1\leqslant j\leqslant n.
    \end{cases}
  \end{align*}
  Notice that $\{a_1,a_2,\ldots,a_k\}$ is approved by every voter $j\in N$. As a consequence, for all $\ell\in\{1,2,\ldots,k\}$, the set of voters $V\subseteq N$ forms an $\ell$-cohesive group if and only if $|V|\geqslant \ell\cdot n/k$. Further, it is not hard to check correctness the following statements for $(P,k)$.
  \begin{enumerate}
    \item $W_1=\{a_1,a_2,\ldots,a_k\}$ satisfies EJR;
    \item $W_0=\{a_{k+1},a_{k+2},\ldots,a_{2k}\}$ satisfies PJR, but fails to satisfy EJR.
  \end{enumerate}
  Now, consider another voting instance $(P',k)$, where
  \begin{align*}
    P_j' = \begin{cases}
      \{a_{k+1},a_{2k+1},a_{2k+2},\ldots,a_{3k-1}\}, & 1\leqslant j\leqslant s, \\
      P_j, & \text{otherwise}.
    \end{cases}
  \end{align*}
  Then $\{1,2,\ldots,s\}$ forms a $1$-cohesive group, and any $V\subseteq \{s+1,s+2,\ldots,n\}$ is $\ell$-cohesive if and only if $|V|\geqslant \ell\cdot n/k$. Therefore, the following statements are true.
  \begin{enumerate}
    \item $W_1'=\{a_1,a_2,\ldots,a_{k-1},a_{2k+1}\}$ satisfies EJR for $(P',k)$;
    \item $W_0$ also satisfies PJR for $(P',k)$, but fails to satisfy EJR.
  \end{enumerate}
  Further, $W_1$ does not satisfy JR for $(P,k)$, since it does not offer any seat for the $1$-cohesive group $\{1,2,\ldots,s\}$. Similarly, $W_1'$ does not satsify JR for $(P,k)$. Notice that $P$ and $P'$ only differ on the first $s$ voter's vote. Therefore, for any $W\in\Ak$, 
  \begin{align*}
    \p[f(P,k)=W] \leqslant \erm^{s\epsilon}\cdot \p[f(P',k)=W].
  \end{align*}
  Since $f$ satisfies $\epsilon$-DP, $\rho$-PJR, and $\kappa$-EJR, we have
  \begin{align*}
    \p[f(P,k)=W_1] &\geqslant \kappa\cdot \p[f(P,k)=W_0] \tag*{($\kappa$-EJR)} \\
    &\geqslant \kappa\cdot\rho\cdot \p[f(P,k)=W_1'] \tag*{($\rho$-PJR)} \\
    &\geqslant \kappa\cdot\rho\cdot\erm^{-s\epsilon}\cdot \p[f(P',k)=W_1'] \tag*{($\epsilon$-DP)} \\
    &\geqslant \kappa^2\cdot\rho\cdot\erm^{-s\epsilon}\cdot \p[f(P',k)=W_0] \tag*{($\kappa$-EJR)} \\
    &\geqslant \kappa^2\cdot\rho^2\cdot\erm^{-s\epsilon}\cdot \p[f(P',k)=W_1] \tag*{($\rho$-PJR)} \\
    &\geqslant \kappa^2\cdot\rho^2\cdot\erm^{-2s\epsilon}\cdot \p[f(P,k)=W_1]. \tag*{($\epsilon$-DP)}
  \end{align*}
  In other words, we have $\kappa\cdot \rho\leqslant \erm^{\lceil n/k \rceil\cdot\epsilon}$, which completes the proof.
\end{proofwithname}

\begin{proofwithname}[Proof of Theorem \ref{thm: 3-PE-JR-upper}]
  First, we prove (1). Let $f\colon\themap$ be a neutral ABC rule satisfying $\epsilon$-DP, $\beta$-PE, and $\theta$-JR. Consider the following voting instance $E=(P,k)$, where $m\geqslant n+2k-1$ and $\{a_1,\ldots,a_k,b_1,\ldots,b_{n-1},c_1,\ldots,c_k\}\subseteq A$.
  \begin{align*}
    P_j = \begin{cases}
      \{a_1,a_2,\ldots,a_k\}, & j=1, \\
      \{a_1,a_2,\ldots,a_k,b_1\}, & j=2, \\
      ~\cdots \\
      \{a_1,a_2,\ldots,a_k,b_1,b_2,\ldots,b_{n-1}\}, & j=n. \\
    \end{cases}
  \end{align*}
  Then we can construct the following committees, where we use $W_{p,q}$ to denote the committee on $p$-th row and $q$-th column in the table.
  \begin{table}[H]
    \renewcommand\arraystretch{1.25}
    \centering
    \begin{tabular}{@{}l@{~}l@{~}l@{~}l@{}}
      $\{a_1,\ldots,a_k\}$ & $\{a_1,\ldots,a_{k-1},b_1\}$ & $\cdots$ & $\{a_1,\ldots,a_{k-1},b_{n-1}\}$ \\
      $\{a_1,\ldots,a_{k-1},c_1\}$ & $\{a_1,\ldots,a_{k-2},b_1,c_1\}$ & $\cdots$ & $\{a_1,\ldots,a_{k-2},b_{n-1},c_1\}$ \\
      $\vdots$ & $\vdots$ &  & $\vdots$ \\
      $\{a_1,c_1,\ldots,c_{k-1}\}$ & $\{b_1,c_1,\ldots,c_{k-1}\}$ & $\cdots$ & $\{b_{n-1},c_1,\ldots,c_{k-1}\}$ \\
      $\{c_1,\ldots,c_k\}$ &&&
    \end{tabular}
  \end{table}
  Notice that the voting instance and the table of committees are exactly the same as those in the proof of Theorem \ref{thm: 2-PE-upper}. Then the relationship of Pareto-dominance among $W_{p,q}$ can be visualized as the same diagram, where the arrows are labeled from $1$ to $kn$.
  \begin{align*}
    \begin{matrix}
      W_{1,1} & \xrightarrow{~~1~~} & W_{1,2} & \xrightarrow{~~2~~} & \cdots & \xrightarrow{n-1} & W_{1,n} \\
      &&&&&& \rotatebox[origin=c]{270}{$\xrightarrow{n}$} \\
      W_{2,n} & \xleftarrow{2n-1} & W_{2,n-1} & \xleftarrow{2n-2} & \cdots & \xleftarrow{n+1} & W_{2,1} \\
      \rotatebox[origin=c]{270}{$\xrightarrow{2n}$} \\
      \vdots \\
      \rotatebox[origin=c]{270}{$\xrightarrow{(k-1)n}$} \\
      W_{k,1} & \xrightarrow{(k-1)n+1} & W_{k,2} & \xrightarrow{(k-1)n+2} & \cdots & \xrightarrow{kn-1} & W_{k,n} \\
      &&&&&& \rotatebox[origin=c]{270}{$\xrightarrow{kn}$} \\
      &&&&&& W_{k+1,1} \\
    \end{matrix}
  \end{align*}
  Since $a_1,a_2,\ldots,a_k$ are approved by all voters, the group of voters $V\subseteq N$ is $\ell$-cohesive if and only if $|V|\geqslant \ell\cdot n/k$, for any $\ell$.
  Further, according to the definition of JR, it is easy to check that
  \begin{align*}
    W_{1,1}\in\JR(P,k)\quad\text{and}\quad W_{k+1,1}\notin\JR(P,k).
  \end{align*}
  Therefore, there must exist an arrow $\xrightarrow{~~t~~}$ in the diagram, connecting a JR committee and a non-JR committee, i.e., there exists a pair of adjacent committees $X,Y$ in the diagram, satisfying $X\in\JR(P,k)$ and $Y\notin \JR(P,k)$. Then we have
  \begin{align*}
    \p[f(P,k)=X] \geqslant \theta\cdot \p[f(P,k)=Y].
  \end{align*}
  Since there are totally $t-1$ arrows between $W_{1,1}$ and $X$, we have
  \begin{align*}
    \p[f(P,k)=W_{1,1}] \geqslant \beta^{t-1}\cdot \p[f(P,k)=X].
  \end{align*}
  Similarly, there are $kn-t$ arrows between $Y$ and $W_{k+1,1}$, which indicates that
  \begin{align*}
    \p[f(P,k)=Y] \geqslant \beta^{kn-t}\cdot \p[f(P,k)=W_{k+1,1}].
  \end{align*}
  According to the three inequalities above, we have
  \begin{align}
    \label{equ: 3-PE-JR-proof}
    \p[f(P,k)=W_{1,1}] \geqslant \beta^{kn-1}\cdot\theta\cdot \p[f(P,k)=W_{k+1,1}].
  \end{align}
  However, since $f$ satisfies $\epsilon$-DP, Lemma \ref{lem: dp-neps} indicates that
  \begin{align*}
    \p[f(P,k)=W_{1,1}] \leqslant \erm^{n\epsilon}\cdot \p[f(P,k)=W_{k+1,1}].
  \end{align*}
  In other words, we have $\beta^{nk-1}\cdot\theta\leqslant \erm^{n\epsilon}$, which completes the proof of (1).
  For (2) and (3), we still consider the voting instances and committees defined above. Then it is easy to check that $W_{1,1}$ also satisfies PJR and EJR for $(P,k)$. Therefore, the inequality (\ref{equ: 3-PE-JR-proof}) also holds when $\theta$-JR is replaced by $\rho$-PJR or $\kappa$-EJR. That completes the proof.
\end{proofwithname}

\begin{proofwithname}[Proof of Theorem \ref{thm: 3-PE-CC-upper}]
  Let $f\colon\themap$ be a neutral ABC rule satisfying $\epsilon$-DP, $\beta$-PE, and $\eta$-CC. We still consider the voting instance $(P,k)$ and the committees $\{W_{p,q}\}$ defined in the proof of Theorem \ref{thm: 2-PE-upper}. Then it is easy to check that the Condorcet committee for $(P,k)$ is $W_{1,1}$. Therefore, 
  \begin{align*}
    \p[f(P,k)=W_{1,1}] &\geqslant \eta\cdot \p[f(P,k)=W_{1,2}] \\
    &\geqslant \eta\cdot\beta^{nk-1}\cdot \p[f(P,k)=W_{k+1,1}].
  \end{align*}
  Then, by Lemma \ref{lem: dp-neps}, we have $\eta\cdot\beta^{nk-1}\leqslant \erm^{n\epsilon}$, which completes the proof.
\end{proofwithname}

\end{document}